\documentclass[aps,prl,reprint,superscriptaddress,nofootinbib,floatfix]{revtex4-2}

\usepackage[T1]{fontenc}
\usepackage[utf8]{inputenc}
\usepackage{lmodern}
\usepackage{microtype}
\usepackage{amsmath,amssymb,amsthm,mathtools,bm}
\usepackage{dsfont}
\usepackage{booktabs,tabularx,array}
\usepackage{xcolor}
\usepackage{graphicx}
\usepackage{hyperref}
\hypersetup{
  colorlinks=true,
  linkcolor=blue!45!black,
  citecolor=blue!45!black,
  urlcolor=blue!45!black,
  pdftitle={Experimentally Testable Quantum Advantage in Shallow Circuits},
  pdfauthor={Kishor Bharti and Adan Cabello}
}

\newtheorem{theorem}{Theorem}
\newtheorem{lemma}[theorem]{Lemma}
\newtheorem{proposition}[theorem]{Proposition}
\newtheorem{corollary}[theorem]{Corollary}
\theoremstyle{definition}
\newtheorem{definition}[theorem]{Definition}
\newtheorem{construction}[theorem]{Construction}
\theoremstyle{remark}
\newtheorem{remark}[theorem]{Remark}

\newcommand{\cval}{\omega_{\mathrm c}}
\newcommand{\qval}{\omega_{\mathrm q}}
\providecommand{\Game}{}
\renewcommand{\Game}{\mathsf G}
\newcommand{\Comp}{\mathsf{Comp}}
\newcommand{\Prb}{\operatorname{Pr}}
\newcommand{\Tr}{\operatorname{tr}}
\newcommand{\NLF}{\operatorname{NLF}}
\newcommand{\LC}{\operatorname{LC}^{-}}
\newcommand{\dist}{\operatorname{dist}}
\newcommand{\ket}[1]{\lvert #1\rangle}

\newcommand{\id}{\mathds{1}}

\begin{document}

\title{Experimentally Testable Quantum Advantage in Shallow Circuits}

\author{Kishor Bharti}
\altaffiliation{Currently at IonQ Inc.}
\email{kishor.bharti1@gmail.com}
\affiliation{Joint Center for Quantum Information and Computer Science, NIST/University of Maryland, College Park, Maryland 20742, USA}
\affiliation{University of Maryland Institute for Advanced Computer Studies, University of Maryland, College Park, Maryland 20742, USA}

\author{Ad{\'a}n Cabello}
\email{adan@us.es}
\affiliation{Departamento de F{\'i}sica Aplicada II, Universidad de Sevilla, E-41012 Sevilla, Spain}
\affiliation{Instituto Carlos I de F{\'i}sica Te\'orica y Computacional, Universidad de Sevilla, E-41012 Sevilla, Spain}

\begin{abstract}
Experimental tests of shallow-circuit quantum advantage require explicit classical bounds at finite circuit sizes. {We refine the finite-size classical soundness bound of Aasn{\ae}ss's graph-distributed construction to depend linearly on the number of players. Combined with standard disjoint-player repetition, this gives a two-round test on a single processor for any finite nonlocal game with a finite-dimensional perfect quantum strategy and classical winning probability $\gamma<1$, with arbitrarily small classical success.} The method is based on playing $m$ copies with disjoint players and teleporting each player's register to a uniformly random one of $N$ sites before the questions are revealed. Each answer bit of a depth-$D$, fan-in-$K$ classical response with fixed wiring depends on at most $K^D$ question wires, making cross-player dependencies unlikely when $N$ is large. The quantum implementation 
has constant depth per round and wins with certainty. A classical device wins with probability at most $\gamma^m+O(mK^D/N)$, which vanishes as $O(\log N/N)$ for $m=\lceil\log_{1/\gamma}N\rceil$ at fixed $D$ and $K$. We present an explicit proposal for an experimentally testable quantum advantage with 99 qubits.
\end{abstract}

\maketitle

\paragraph{Introduction.}
A claim of quantum advantage must specify the classical computational model being excluded. The quantum advantage of sampling experiments~\cite{Arute2019,Zhong2020,Wu2021,Madsen2022} is conditional on complexity conjectures~\cite{HarrowMontanaro2017,HangleiterEisert2023}. Two alternative approaches instead fix the classical model and prove an upper bound within that model~\cite{CleveHoyerTonerWatrous2004,BGK2018}. Nonlocal games have {well-defined classical values~\cite{Palazuelos2016}} and have become practical benchmarks~\cite{Xu2022,Daniel2022,Sheng2025,Drmota2025,Than2026}, but exceeding the classical value certifies nonclassicality only under the assumption {(which is not enforced by spacelike separation on a single processor)} that the players cannot communicate.
Cryptographic compilations of nonlocal games replace the no-communication assumption by computational assumptions~\cite{KahanamokuMeyer2022,KalaiLombardiVaikuntanathanYang2023,NatarajanZhang2023,KulpeMalavoltaPaddockSchmidtWalter2025}. Shallow-circuit separations, initiated by Bravyi, Gosset, and K\"onig (BGK)~\cite{BGK2018} and developed in many directions since~\cite{LeGall2019,CoudronStarkVidick2021,WattsKothariSchaefferTal2019,OliveiraSubramanianMendesHsieh2025,GrewalKumar2025,GriloKashefiMarkhamOliveira2026,HsiehOliveiraSubramanianZhang2026,GrierSchaeffer2020,GrierJuSchaeffer2021,BGKT2020,HasegawaLeGall2021,BhartiJain2023,CahaCoiteuxRoyKoenig2024,CahaCoiteuxRoyKoenig2026,ZhangPanLiu2024,ZhangGongLiDeng2024,PirnayJerbiSeifertEisert2024,WattsGossetLiuSoleimanifar2025,McGinleyHoMalz2025,WattsParham2026,GrierKaneMorrisOstuniWu2026}, replace the no-communication assumption by a restriction on circuit depth: a constant-depth quantum circuit solves a task that no classical circuit of bounded fan-in and comparable depth can, without any complexity assumption. {Related measurement-driven and adaptive shallow-circuit proposals use sampling or discrete-logarithm hardness assumptions~\cite{CaoEisert2026,WuZhangZhangWangYuan2026}.} 
BGK give a classical success bound for sufficiently large instances~\cite[Theorem~2]{BGK2018}, and finite shallow-circuit benchmarks have been studied experimentally with a nearest-neighbor restriction on classical connectivity~\cite{Daniel2022}.
We seek a finite experimental test with an unconditional advantage against a classical competitor that may be wired arbitrarily, a classical threshold that can be made as small as desired, and an acceptance rule for a finite number of trials.

{We sharpen the finite-size classical soundness bound of Aasn{\ae}ss's graph-distributed construction~\cite{Aasnaess2022} from quadratic to linear dependence on the number of players, and combine it with standard disjoint-player repetition to specify finite experimental tests.} Its input is any finite nonlocal game $\Game$ with a finite-dimensional perfect quantum strategy, $\qval(\Game)=1$, and classical value $\cval(\Game)=\gamma<1$, where $\qval$ and $\cval$ are the optimal quantum and classical winning probabilities.

The test is constructed from $\Game$ in two steps. First, $m$ copies of $\Game$ are played by $m$ disjoint sets of players. Then, the classical value becomes exactly $\gamma^m$ while quantum players still win with certainty. Second, each player's quantum register is teleported through Bell pairs to a uniformly random one of $N$ sites, the questions are revealed only afterwards and delivered to those sites, and the answers are read there. This step specializes the graph-distributed compiler of Aasn{\ae}ss~\cite{Aasnaess2022}, which generalizes BGK to arbitrary games and preserves the success probability of any finite-dimensional strategy [Eq.~(4.46) of Ref.~\cite{Aasnaess2022}]; we use a perfect strategy so that all $m$ copies keep winning with certainty. A classical circuit whose response to the questions has depth $D$ and fan-in $K$ can consult at most $K^D$ input wires per answer bit, so the probability that any answer depends on another player's question is small for sufficiently large $N$. When no such dependency occurs, the answers form a local strategy, which wins with probability at most $\gamma^m$. The resulting bound, presented in Eq.~\eqref{eq:main-bound}, has explicit constants and vanishes as $O(\log N/N)$ for $m=\lceil\log_{1/\gamma}N\rceil$ at fixed $D,K$; classical depth $\Omega(\log N)$ is necessary for any fixed positive success probability, and depth $O(\log N)$ suffices to win with certainty. The quantum device starts in a product state and has constant depth in each of two rounds; the certification criterion uses the measured winning frequency; losses count as failures; and the threshold is a number fixed before data are taken. We give a 99-qubit Greenberger--Horne--Zeilinger (GHZ) instance implementing the two-round protocol and multi-copy instances whose thresholds lie well below $3/4$.

\begin{figure*}[t]
\centering
\includegraphics[width=\textwidth]{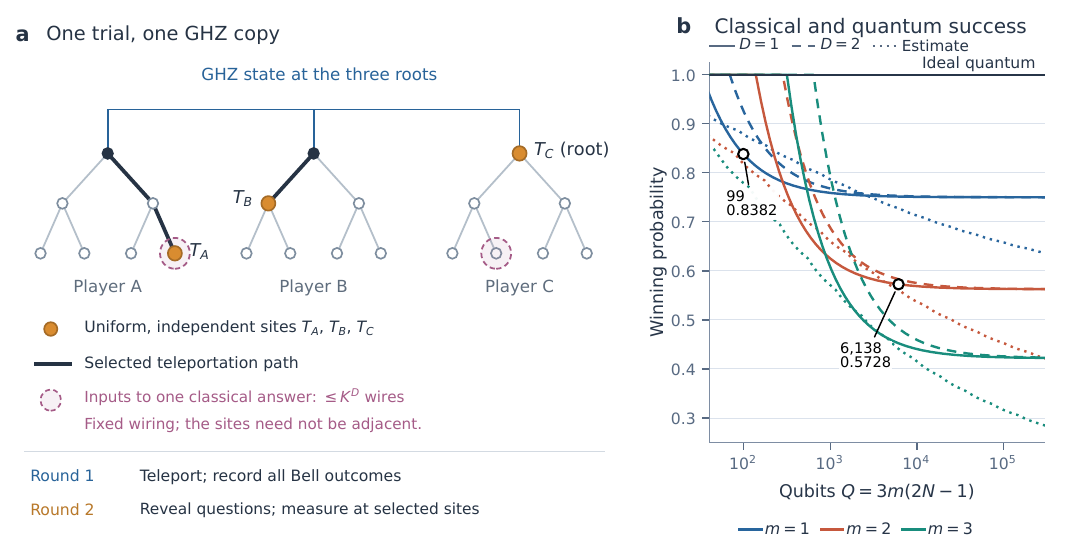}
\caption{(a) One trial for one GHZ copy. Each player's qubit is teleported through Bell pairs along its host graph to a uniformly random site $T_j$ (orange). The questions are revealed after the Bell outcomes are recorded and are delivered only to those sites. An answer from a depth-$D$, fan-in-$K$ classical circuit depends on at most $K^D$ input wires (purple circles illustrate two possible input locations). (b)~Classical threshold $B_{\rm c}$ from Eq.~\eqref{eq:main-bound} against the number of qubits $Q=3m(2N-1)$ for tree hosts, with $m=1,2,3$ GHZ copies and fan-in $K=2$. Solid curves have $D=1$; dashed curves have $D=2$. The ideal quantum winning probability is one. Circles mark the 99-qubit instance and the $m=2$, $N=512$ instance. Dotted curves show the quantum estimate in Eq.~\eqref{eq:sm-hardware-estimate} for binary-tree hosts, with gate and readout error rates from Ref.~\cite{Kumar2025Contextuality} and assumed baseline correlators $V_0=0.95$. The estimate omits additional idling and routing errors.}
\label{fig:main-protocol}
\par\smallskip\begin{minipage}{\textwidth}\small\raggedright 
\end{minipage}
\end{figure*}

\paragraph{Games and the classical competitor.}
{A finite $b_0$-player game $\Game$ consists of a distribution over question tuples $x=(x_1,\ldots,x_{b_0})$, answers $a_j$ of at most $a$ bits each, and a predicate $V(x,a)$. Classical players share randomness and answer locally; quantum players share a state and answer by local measurements. $\Game$ is a quantum pseudo-telepathy game if some finite-dimensional quantum strategy wins every question and $\cval(\Game)=\gamma<1$~\cite{Liu2024}. In the GHZ game the questions are drawn uniformly from $\{000,011,101,110\}$, the answer bits must satisfy $a\oplus b\oplus c=x\lor y\lor z$, and $(\cval,\qval)=(3/4,1)$: measuring $X$ on question $0$ and $Y$ on question $1$ of $(\ket{000}+\ket{111})/\sqrt2$, and reporting eigenvalue $+1$ as bit $0$ and $-1$ as bit $1$, wins every round~\cite{Mermin1990}.}

{The classical competitor is not a set of isolated players. It is a randomized circuit with arbitrary nonuniform wiring across all $Nb$ sites, where $b$ is the total number of players, with unlimited shared randomness and unlimited computation and memory before the questions are revealed; nothing in round one is charged. Only its response to the questions is restricted: the circuit segment from the fresh question wires to the answer wires has depth at most $D$ and fan-in at most $K$, with unrestricted fan-out, and its wiring is fixed before the sites are drawn, so any site-dependent routing must be done by gates and costs depth. Depth counts sequential gate layers and fan-in counts the inputs of a gate; we call $D$ the online depth. For a fixed base game, every local strategy has a constant-size implementation. The present model also permits communication through its bounded-depth response circuit.}

\paragraph{Amplification.}
{Let $\Game^{\sqcup m}$ be the game in which $m$ copies of $\Game$ are played by $m$ disjoint sets of players and the verifier accepts only if every copy is won; it has $b=b_0m$ players. Because the players of different copies are different, once the shared randomness is fixed the winning events of the copies are functions of independent question tuples, and therefore}
\begin{equation}
 \cval(\Game^{\sqcup m})=\cval(\Game)^m=\gamma^m,
 \label{eq:main-product}
\end{equation}
{while $m$ copies of the perfect strategy win with certainty (SM Theorem~\ref{thm:sm-product}). {Equation~\eqref{eq:main-product} is the standard disjoint-player repetition used, for example, in Ref.~\cite[Definition~3.10]{CoudronStarkVidick2021}; its value factorizes exactly, so no nontrivial parallel-repetition theorem is needed.}}

{The independent objects are the verifier's question tuples for the different copies, not the processor's operations or errors. The classical device may couple all copies and share arbitrary pre-question memory. The lightcone bound counts dependencies between every pair of distinct players, including those in different copies; only when none occurs do the answers reduce to a local strategy for the disjoint-player game.}

\paragraph{Teleportation to random sites.}
{Let $H_N$ be a connected host graph with $N$ vertices and bounded degree; each player receives its own copy, with one game register at the root and two registers per edge for a Bell pair. A graph vertex is a site that can hold several registers; different edges use distinct half-edge registers. A trial proceeds as in Fig.~\ref{fig:main-protocol}(a). (i)~The verifier draws a site $T_j$ (a terminal, in the language of the SM) uniformly and independently from the $N$ vertices for every player $j$. (ii)~Round one: the device prepares the game states at the roots and Bell pairs on the edges of the selected root-to-$T_j$ paths, and Bell-measures along each path, which teleports the register to $T_j$ up to a Pauli (in general, Weyl) frame, a known local correction determined by the Bell outcomes~\cite{Bennett1993}; the measurements act on disjoint registers and are simultaneous. The device reports the outcomes and the verifier computes the frame. (iii)~Only now does the verifier reveal the questions; it delivers $x_j$, together with the frame, to site $T_j$ alone, while every other site of player $j$ receives a fixed null symbol. (iv)~Round two: the device answers at $T_j$. A quantum device performs the frame-corrected measurement; for Pauli measurements the correction is a bit flip of the recorded outcome, $r\oplus B\oplus Ax$ for frame $X^AZ^B$, where $r$ is the raw outcome bit and $x=0,1$ selects $X,Y$, respectively, so no quantum feed-forward is required. Conditioned on any sites and any transcript, the statistics at the sites equal those of the original strategy, so the ideal quantum device wins with probability one, starting from a product state and with constant depth in each round (SM Theorem~\ref{thm:sm-completeness}). The verifier computes the path frames between rounds; this classical processing is not included in device depth.}

The selected sites need not remain unknown to the device: the round-one path flags reveal them. What is fixed before their draw is the classical response wiring. An answer block can depend on at most $aK^D$ question-input blocks, so another player's independently drawn site lies in that set with probability at most $aK^D/N$. Learning the site does not provide free rewiring: any selector or routing circuit must fit within the stated depth and fan-in.

\paragraph{Main result.}
{Let $\Game$ have $b_0$ players, at most $a$ answer bits per player, classical value $\gamma<1$, and a finite-dimensional perfect quantum strategy. In the two-round test built from $\Game^{\sqcup m}$ on any connected $N$-vertex host of bounded degree, the ideal quantum device wins with probability one, and every randomized classical device whose online response has depth $D$ and fan-in $K$ wins with probability}
\begin{equation}
 p_{\rm c}\leq \gamma^m+(1-\gamma^m)
 \min\!\left\{1,\frac{a\,b\,K^D}{N}\right\},
 \quad b=b_0m.
 \label{eq:main-bound}
\end{equation}
The proof (SM Sec.~\ref{sec:sm-soundness}) counts the question blocks that can influence each selected answer. Fix a target player $j'$, its site $T_{j'}=v$, and the device's independent random tape. The backward lightcones of its at most $a$ answer wires contain at most $aK^D$ input wires in total. If $S_{j,j'}$ is the set of sites of another player $j$ whose question blocks meet these lightcones, then $\sum_{j\ne j'}|S_{j,j'}|\leq aK^D$. Each $T_j$ remains uniform, so the probability that any other player's question reaches this answer is at most $\sum_{j\ne j'}|S_{j,j'}|/N\leq aK^D/N$. Averaging over the target site and random tape and taking a union bound over the $b$ targets gives a collision probability $q\leq abK^D/N$. We do not condition on the round-one transcript in this count, because it can depend on the sites. When no collision occurs, fixing the sites and pre-question data gives a local strategy for $\Game^{\sqcup m}$, which wins with probability at most $\gamma^m$ by Eq.~\eqref{eq:main-product}; allowing success on every collision gives $p_{\rm c}\leq(1-q)\gamma^m+q$. The count includes players in different copies: the two-copy GHZ example in SM Remark~\ref{rem:sm-crosscopy} shows why those dependencies cannot be omitted.

{Choosing $m=\lceil\log_{1/\gamma}N\rceil$ gives $\gamma^m\leq1/N$ and}
\begin{equation}
 p_{\rm c}\leq\frac{1+a\,b_0\,K^D\,(\log_{1/\gamma}N+1)}{N},
 \label{eq:main-asymptotic}
\end{equation}
{which is $O(\log N/N)$ at fixed depth and tends to zero for every $D=o(\log N)$ at fixed $K$. Conversely, a classical circuit of fan-in two and depth $O(\log N)$ wins with certainty by collecting each sparse question through an OR tree over its $N$ blocks (SM Prop.~\ref{prop:sm-upper}). {Thus the classical depth scaling $\Theta(\log N)$ is tight up to constants in this amplified regime; optimality of the finite winning-probability bound is not claimed.} If the classical circuit is also required to be geometrically local on a path or grid host, its depth must be comparable to the host diameter at fixed $m$ (SM Sec.~\ref{sec:sm-geometry}).}

\paragraph{Proposed experiment.}
{We use the GHZ game because each player requires one qubit, chooses between $X$ and $Y$ measurements, and applies a bit-flip frame correction. For a tree host with $N$ vertices per player, the register (unencoded data and resource qubits, excluding readout ancillas and routing overhead) has}
\begin{equation}
 Q=3m(2N-1)
 \label{eq:main-qubits}
\end{equation}
{qubits: $3m$ game qubits and one Bell pair per edge. Table~\ref{tab:main-benchmarks} lists instances against fan-in-two classical circuits and the thresholds they set for each online depth; Fig.~\ref{fig:main-protocol}(b) shows how the threshold falls with qubit number.}

The 99-qubit instance ($m=1$, $N=17$) has thresholds $B_{\rm c}=57/68\simeq0.8382$ for depth one and $63/68\simeq0.9265$ for depth two. Both exceed the GHZ value $3/4$ because the competitor may wire all $51$ sites together. This single-copy instance implements product-state initialization, parallel Bell-pair preparation, simultaneous multi-hop Bell measurements, frame bookkeeping, question delivery after the transcript, and corrected $X/Y$ readout. For $m=2$, the base classical value decreases from $3/4$ to $9/16$ before the finite-$N$ correction; at $N=512$ the thresholds are $0.573$ for depth one and $0.583$ for depth two.

{One trial consists of: (1)~reset all registers; (2)~draw the $3m$ sites; (3)~prepare the GHZ states at the roots and Bell pairs on the edges of the selected paths---preparing every edge is allowed but unnecessary, and unused qubits remain in $\ket0$; (4)~perform the selected Bell measurements simultaneously and record their outcomes; (5)~only then reveal the questions and deliver them to the selected sites; (6)~measure $X$ for question $0$ or $Y$ for question $1$ there and apply the bit-flip correction to the record; (7)~count the trial as a win only if every copy satisfies the parity rule. A missing outcome, leakage event, timeout, or unheralded loss is a loss. No postselection is used; any postselected acceptance rule would have to be declared in advance and Eq.~\eqref{eq:main-bound} rederived for it.}

{Because Eq.~\eqref{eq:main-bound} depends on the host graph only through $N$ and the classical restriction applies only to round two, the host may be chosen for the quantum hardware. In the noise model below, attenuation is determined by the number of teleportation hops. For $N=17$, a balanced binary tree has mean path length $42/17\approx2.5$, a degree-four tree of depth two ($1+4+12$ vertices) has $28/17\approx1.6$, and a star has one hop for every non-root site. The degree-four tree retains the bounded-degree guarantee of constant quantum depth. A star loses that guarantee for its measurement-selection circuit as $N$ grows. The winning probability is $(1+V)/2$, where $V$ is the mean signed parity correlator of the state measured at the sites after preparation, teleportation, and readout, so the depth-one threshold requires $V>23/34\simeq0.6765$, the depth-two threshold requires $V>29/34\simeq0.8529$, and a target winning probability of $0.96$ requires $V\approx0.92$. For a measured preparation-and-readout visibility $V_0=0.95$, the scalar attenuation model in SM Sec.~\ref{sec:sm-99qubit} gives targets of about $0.4\%$ correlator attenuation per teleportation for the binary tree and about $1\%$ for the star. These are model-dependent estimates, not general gate-infidelity requirements.}

Using the gate and readout error rates reported for the GHZ experiment of Ref.~\cite{Kumar2025Contextuality}, the explicit model in SM Sec.~\ref{sec:sm-99-noise} gives success probabilities of $0.879$ for the 99-qubit binary tree, $0.908$ for the degree-four tree, and $0.935$ for the star, before additional idling and routing errors. All three model estimates exceed the depth-one threshold, and the star estimate also exceeds the depth-two threshold. These are model estimates, not measured scores for the protocol.

\begin{table}[t]
\caption{{GHZ instances against fan-in-two ($K=2$) classical circuits. $B_{\rm c}$ is the right-hand side of Eq.~\eqref{eq:main-bound} for online depth $D$; an entry of $1$ means that the instance does not constrain that depth. Doubling $N$ at fixed $m$ preserves the same bound while increasing $D$ by one.}}
\label{tab:main-benchmarks}
\centering
\begin{ruledtabular}
\begin{tabular}{rrrcccc}
$m$ & $N$ & $Q$ & $D=1$ & $D=2$ & $D=3$ & $D=4$\\
\hline
1 & 17 & 99 & 0.8382 & 0.9265 & 1 & 1\\
1 & 64 & 381 & 0.7734 & 0.7969 & 0.8438 & 0.9375\\
2 & 256 & 3066 & 0.5830 & 0.6035 & 0.6445 & 0.7266\\
2 & 512 & 6138 & 0.5728 & 0.5830 & 0.6035 & 0.6445\\
3 & 2048 & 36855 & 0.4270 & 0.4320 & 0.4422 & 0.4625\\
\end{tabular}
\end{ruledtabular}
\end{table}

\paragraph{Certification.}
{Let $p_{\rm exp}$ denote the true per-trial winning probability. Let $\widehat p$ be the fraction of $M$ trials, with $M$ fixed in advance, in which every copy is won. For independent, identically distributed trials, Hoeffding's inequality gives the one-sided bound}
\begin{equation}
 L_\alpha=\widehat p-
 \sqrt{\frac{\ln(1/\alpha)}{2M}},
 \qquad
 \Prb[p_{\rm exp}\geq L_\alpha]\geq1-\alpha,
 \label{eq:main-confidence}
\end{equation}
{and advantage over the declared classical class is certified when $L_\alpha>B_{\rm c}$. At $99\%$ confidence, an observed fraction at least $0.96$ on the 99-qubit instance exceeds the depth-one threshold with $156$ trials; a fraction at least $0.94$ does so with $223$ trials. The corresponding depth-two counts are $2049$ and $12580$. A margin of $0.03$ above a declared threshold requires $2559$ trials. No noise model and no independence between copies is assumed: state-preparation, Bell-measurement, idling, and readout errors all enter through $\widehat p$. Per-copy and per-path-length statistics are useful for hardware development but must not be used to discard trials. For correlated trials, a predeclared martingale bound applies provided that fresh independent sites and questions and the same circuit restriction on every trial keep its conditional classical success at most $B_{\rm c}$} (SM Sec.~\ref{sec:sm-experiment}).

\paragraph{Discussion.}
A successful experiment shows that a quantum circuit starting from a product state and having constant depth in each of two rounds produces correlations unattainable by any classical circuit with arbitrary wiring and precomputation whose response to the questions has depth $D$ and fan-in $K$. A classical circuit of depth $\Theta(\log N)$ wins with certainty, so the separation concerns the stated depth restriction. A wall-clock interpretation additionally requires a gate-time model. The test assumes a trusted verifier and a quantum implementation satisfying the stated circuit restrictions; it is not device independent. The executed circuit must therefore be reported with the data. SM Sec.~\ref{sec:sm-certification-scope} gives the full interpretation and reporting requirements.

{Relative to a GHZ or Bell test, the assumption of spacelike separation is replaced by a depth restriction on a globally wired competitor, which defines the classical comparison class for a test on a single quantum processor; relative to the benchmark experiments of Refs.~\cite{Daniel2022,Than2026}, the competitor is restricted neither to the quantum device's connectivity nor to isolated players. {Vanishing classical success was already obtained by Le Gall and by Coudron, Stark, and Vidick~\cite{LeGall2019,CoudronStarkVidick2021}. Here the quantitative improvement is the linear player prefactor in the collision bound, with explicit finite thresholds from Eq.~\eqref{eq:main-bound}:} for GHZ, the register size is $Q\approx2bN$, and the collision probability is bounded by $\min\{1,abK^D/N\}$. The verifier is trusted to draw sites and questions independently and to withhold the questions until round two; what the proof uses is that no device can access them earlier, which for a fixed public circuit is a structural property rather than a timing one.}

{The ideal theorem is noiseless and gives no fault-tolerance threshold: a constant error per hop degrades the winning probability with path length, so reaching the $m=\Theta(\log N)$ regime at constant physical noise requires {protection against accumulated errors; fault-tolerant encoded constructions such as Ref.~\cite{BGKT2020} provide one route}. The finite test requires no noise model: certification follows when a valid lower confidence bound on the winning probability exceeds the classical threshold. Open questions include a one-round version for general measurements, separations from stronger classical classes such as $\mathsf{AC}^0$ by switching-lemma methods, and small-input perfect games with classical values well below $3/4$ that could reduce the resources required by the contextuality construction in Fig.~\ref{fig:sm-contextuality}.}

\paragraph{Contextuality-based games.} 
{As an application of known KS-to-game constructions, quantum measurement contextuality~\cite{KochenSpecker1967,Budroni2022} supplies examples for the circuit test.} A finite-dimensional example of state-dependent contextuality, consisting of a state $\ket\psi$ and finitely many rank-one projectors violating a noncontextuality inequality, can be extended to a finite Kochen--Specker (KS) set containing the state and measurement projectors~\cite{Cabello2021}. If the initial projectors already contain a KS set, we can use that set directly. The KS set gives a bipartite perfect quantum strategy~\cite{Cabello2025BPQS,TrandafirCabello2025}, to which our amplification and shallow-circuit construction apply. Criticality of the KS set is not required. This construction extends the measurement set; it does not assert that the initial violation is necessary for, or quantitatively preserved by, the resulting game. Figure~\ref{fig:sm-contextuality} summarizes these steps, and SM Sec.~\ref{sec:sm-contextuality} gives an explicit qutrit example.

\paragraph{Acknowledgments.} K.B.\ was supported by a Hartree Fellowship from the Joint Center for Quantum Information and Computer Science (QuICS) at the University of Maryland, College Park. A.C.\ was supported by the AEI/MICIU Project ``Quantum Correlations for Quantum Information and Computation'' (QCQIC) (Project No.~PID2025-174467NB-I00). Generative artificial intelligence tools were used to assist with developing ideas, language editing, organization, and the preparation of portions of the manuscript.

\makeatletter
\newcommand{\KBRefChg}[2]{\expandafter\let\csname KBRefChg@#1@#2\endcsname\@empty}
\KBRefChg{Budroni2022}{title}
\KBRefChg{TrandafirCabello2025}{title}
\KBRefChg{Cabello2025KS}{title}
\KBRefChg{Lalonde2026}{eprint}
\KBRefChg{Renner2026}{eprint}
\KBRefChg{Aasnaess2022}{eprint}
\KBRefChg{Pauwels2026}{title}
\KBRefChg{Pauwels2026}{eprint}
\KBRefChg{Bennett1993}{title}
\KBRefChg{YuOh2012}{title}
\KBRefChg{BhartiJain2023}{eprint}
\KBRefChg{GrierSchaeffer2020}{title}
\KBRefChg{GrierJuSchaeffer2021}{title}
\KBRefChg{GrierJuSchaeffer2021}{eprint}
\KBRefChg{KahanamokuMeyer2022}{title}
\KBRefChg{OliveiraBarbosaGalvao2024}{author}
\KBRefChg{PirnayJerbiSeifertEisert2024}{eprint}
\KBRefChg{GrewalKumar2025}{eprint}
\KBRefChg{GriloKashefiMarkhamOliveira2026}{title}
\KBRefChg{GriloKashefiMarkhamOliveira2026}{author}
\KBRefChg{HsiehOliveiraSubramanianZhang2026}{eprint}
\KBRefChg{Than2026}{eprint}
\KBRefChg{WuZhangZhangWangYuan2026}{eprint}
\KBRefChg{Kumar2025Contextuality}{eprint}
\let\KBOriginalBibitem\bibitem
\renewcommand{\bibitem}[2][]{\def\KBCurrentBibKey{#2}\KBOriginalBibitem[#1]{#2}}
\providecommand{\bibinfo}[2]{#2}
\let\KBOriginalBibinfo\bibinfo
\renewcommand{\bibinfo}[2]{\ifcsname KBRefChg@\KBCurrentBibKey @#1\endcsname{\KBOriginalBibinfo{#1}{#2}}\else\KBOriginalBibinfo{#1}{#2}\fi}
\providecommand{\Eprint}[2]{\href{#1}{#2}}
\let\KBOriginalEprint\Eprint
\renewcommand{\Eprint}[2]{\ifcsname KBRefChg@\KBCurrentBibKey @eprint\endcsname\KBOriginalEprint{#1}{{#2}}\else\KBOriginalEprint{#1}{#2}\fi}
\makeatother
\bibliography{Ref}
\let\bibitem\KBOriginalBibitem
\let\bibinfo\KBOriginalBibinfo
\let\Eprint\KBOriginalEprint

\clearpage
\onecolumngrid
\makeatletter
\renewcommand{\theHsection}{supp.\arabic{section}}
\renewcommand{\theHequation}{supp.\arabic{equation}}
\renewcommand{\theHfigure}{supp.\arabic{figure}}
\renewcommand{\theHtable}{supp.\arabic{table}}
\providecommand{\theHtheorem}{}
\renewcommand{\theHtheorem}{supp.\arabic{theorem}}
\makeatother
\setcounter{secnumdepth}{3}
\setcounter{section}{0}
\setcounter{equation}{0}
\setcounter{figure}{0}
\setcounter{table}{0}
\setcounter{theorem}{0}
\renewcommand{\thesection}{S\arabic{section}}
\renewcommand{\thesubsection}{S\arabic{section}.\arabic{subsection}}
\makeatletter
\renewcommand{\p@subsection}{}
\makeatother
\renewcommand{\theequation}{S\arabic{equation}}
\renewcommand{\thefigure}{S\arabic{figure}}
\renewcommand{\thetable}{S\arabic{table}}
\renewcommand{\thetheorem}{S\arabic{theorem}}

\begin{center}
{\large\bfseries Supplemental Material for\\[2mm]
``Experimentally Testable Quantum Advantage in Shallow Circuits''}
\end{center}
\vspace{1mm}

This Supplemental Material provides the proofs, protocol details, and resource estimates used in the Letter. Section~\ref{sec:sm-related} discusses the scope of the claim and previous work. Sections~\ref{sec:sm-prelim}--\ref{sec:sm-soundness} define the games and circuits and prove exact amplification, quantum completeness, and classical soundness. Sections~\ref{sec:sm-amplified}--\ref{sec:sm-geometry} derive the asymptotic and geometric consequences. Section~\ref{sec:sm-games} gives the base games. Sections~\ref{sec:sm-experiment}--\ref{sec:sm-99qubit} give the experimental protocols, resource counts, host choice, error budget, and statistical tests. Section~\ref{sec:sm-limitations} states the assumptions, limitations, and open questions. Section~\ref{sec:sm-contextuality} gives the contextuality construction and example.

\section{Scope of the claim and relation to previous work}
\label{sec:sm-related}

\subsection{Finite-size separation problem}

The result addresses the following finite-size problem. Start with a finite nonlocal game $\Game$ that has a finite-dimensional perfect quantum strategy and classical value $\gamma<1$. We ask whether one can construct, for every size parameter $N$, an experimentally specified task satisfying all of the following conditions:
\begin{enumerate}
\item The quantum device starts from a computational-basis product state, uses bounded-fan-in gates, and has constant depth in each of two rounds;
\item The quantum winning probability is exactly one in the ideal model;
\item A randomized classical device may have arbitrary nonuniform wiring, shared randomness, and arbitrary information produced before the game questions are {revealed}, but its second-round response has fan-in $K$ and depth $D$;
\item The success probability of every such classical device is bounded by an explicit function of $N,D,K$, and the base-game parameters; and
\item This classical bound tends to zero for fixed $D$ when the repetition parameter is chosen as $m=\Theta(\log N)$.
\end{enumerate}
Theorem~\ref{thm:sm-amplified} answers this question affirmatively. Inequality~\eqref{eq:main-bound} gives a threshold for each finite experiment.

\subsection{Classical comparison and assumptions}
\label{sec:sm-meaning}

The term quantum advantage is used for several distinct comparisons. Random-circuit sampling and photonic sampling experiments compare a quantum processor with the strongest available classical simulation methods~\cite{Arute2019,Zhong2020,Wu2021,Madsen2022}; their complexity-theoretic foundations and verification issues are reviewed in Refs.~\cite{HarrowMontanaro2017,HangleiterEisert2023}. These experiments are major demonstrations of control over large quantum systems, but their broad asymptotic interpretation is not a proved lower bound against unrestricted classical computation. Classically verifiable protocols based on computational Bell tests form another nearby line and use cryptographic assumptions~\cite{KahanamokuMeyer2022}. More generally, Kalai, Lombardi, Vaikuntanathan, and Yang showed how to turn any multiprover nonlocal game into a single-prover interactive protocol using quantum homomorphic encryption~\cite{KalaiLombardiVaikuntanathanYang2023}. Natarajan and Zhang and subsequent work studied the quantum soundness of these compiled games~\cite{NatarajanZhang2023,KulpeMalavoltaPaddockSchmidtWalter2025}. These cryptographic constructions remove the need for spatially separated players, but their soundness is computational and rests on cryptographic assumptions.

``Unconditional'' means that, once the classical circuit model is fixed, the upper bound follows from a proved finite-game inequality, independent random choices, and circuit causality. No assumption such as $\mathsf{BPP}\neq\mathsf{BQP}$, the noncollapse of the polynomial hierarchy, or the hardness of a cryptographic primitive is used. The comparison is with classical circuits of the specified online depth and fan-in. Section~\ref{sec:sm-certification-scope} states what an experiment certifies and which implementation assumptions it requires.

\subsection{Unconditional shallow-circuit separations}

\emph{Relations, average-case tasks, and interaction.}
Bravyi, Gosset, and K\"onig introduced an explicit relation solved by constant-depth bounded-fan-in quantum circuits but not by comparable classical circuits~\cite{BGK2018}. Theorem~2 in ~\cite{BGK2018} states that, for sufficiently large $N$, a fan-in-$K$ classical circuit that succeeds with probability greater than $7/8$ on every size-$N$ instance must have depth at least $\log N/(8\log K)$. This worst-case guarantee is distinct from an average-score threshold for a specified experimental question distribution.
 Le Gall and, independently, Coudron, Stark, and Vidick developed average-case and low-depth-randomness variants~\cite{LeGall2019,CoudronStarkVidick2021}. {Their bounds already give classical success exponentially small in a positive power of the size parameter~\cite[Theorem~1]{LeGall2019},~\cite[Theorem~1.2]{CoudronStarkVidick2021}.} Bene Watts, Kothari, Schaeffer, and Tal strengthened the classical side to unbounded-fan-in $\mathsf{AC}^0$ circuits for a related relation problem~\cite{WattsKothariSchaefferTal2019}. Grier and Schaeffer showed that shallow Clifford circuits become more powerful in interactive settings, {with unconditional separations from $\mathsf{AC}^0[p]$ and conditional separations from $\mathsf{NC}^1$ and log-space computation~\cite{GrierSchaeffer2020}}, and Grier, Ju, and Schaeffer studied a noisy interactive version~{\cite{GrierJuSchaeffer2021}}.

\emph{Noise, corruption, and geometry.}
Bravyi, Gosset, K\"onig, and Tomamichel proved that an unconditional advantage can survive local stochastic noise using a three-dimensional fault-tolerant construction and also gave a simpler ideal one-dimensional result~\cite{BGKT2020}. Hasegawa and Le Gall considered arbitrary corruption of a constant fraction of the output qubits~\cite{HasegawaLeGall2021}. Bharti and Jain obtained a one-dimensional local separation with noise and sublinear classical depth when input-independent entangled advice is available~\cite{BhartiJain2023}. {Caha, Coiteux-Roy, and K\"onig later proved a noisy three-dimensional local separation from subexponential-size $\mathsf{AC}^0$ circuits, with a quantum-classical success gap arbitrarily close to one~\cite{CahaCoiteuxRoyKoenig2026}.} These results obtain stronger noise or gate-class guarantees than the present theorem, while using constructions tailored to those guarantees.

\emph{Experiments and game-based benchmarks.}
Nonlocal games have also been used directly as hardware benchmarks. Daniel {\em et al.} implemented cyclic-cluster-state games that compare shallow quantum and classical circuits with the same nearest-neighbor connectivity~\cite{Daniel2022}. Xu {\em et al.} demonstrated quantum pseudotelepathy~\cite{Xu2022}; Sheng {\em et al.} experimentally converted contextuality into nonlocality~\cite{Sheng2025}; Drmota {\em et al.} violated the classical value of the odd-cycle game with separated trapped-ion nodes~\cite{Drmota2025}; and Than {\em et al.} exceeded the classical value of a graph-coloring game on a trapped-ion processor~\cite{Than2026}. These experiments establish nonclassical correlations under a no-communication condition or compare circuits with a fixed geometry. Our proposed test asks a different question: can a device beat every globally wired classical circuit with a declared online depth and fan-in? {The amplification step can make the corresponding finite threshold substantially smaller than the classical value of one base game.}

\emph{Other resources and stronger classical classes.}
{Aasn{\ae}ss developed a general graph-distributed shallow-circuit construction from contextuality and nonlocal games, including teleportation-based quantum completeness and a lightcone/contextual-fraction classical bound~\cite{Aasnaess2022}. The random-terminal compiler used here specializes that construction.} Caha, Coiteux-Roy, and K\"onig gave an unconditional separation based on single-qubit gate teleportation rather than a Bell inequality~\cite{CahaCoiteuxRoyKoenig2024}. De Oliveira, Barbosa, and Galv{\~a}o studied advantage in temporally flat measurement-based computation~\cite{OliveiraBarbosaGalvao2024}, while Zhang, Pan, and Liu isolated quantum magic as a necessary resource for a shallow-circuit pseudo-telepathy task~\cite{ZhangPanLiu2024}. {Their construction uses a two-round teleportation protocol to separate general quantum circuits from Clifford circuits.} De Oliveira, Subramanian, Mendes, and Hsieh established noisy qudit separations from biased threshold circuits~\cite{OliveiraSubramanianMendesHsieh2025}. Grewal and Kumar and, in a related direction, Grilo, Kashefi, Markham, and de Oliveira obtained separations from more expressive constant-depth classical circuit classes~\cite{GrewalKumar2025,GriloKashefiMarkhamOliveira2026}. Hsieh, de Oliveira, Subramanian, and Zhang proved a depth hierarchy within shallow quantum circuits while also obtaining classical separations~\cite{HsiehOliveiraSubramanianZhang2026}.

\emph{Sampling, random circuits, and learning.}
Bene Watts and Parham proved an unconditional shallow-circuit sampling separation under a restriction on the classical randomness, and Grier, Kane, Morris, Ostuni, and Wu later obtained a depth-seven geometrically local sampling separation against bounded-fan-in classical circuits with arbitrary product-distributed inputs~\cite{WattsParham2026,GrierKaneMorrisOstuniWu2026}. Random shallow circuits can also give unconditional advantage through measurement-induced entanglement~\cite{WattsGossetLiuSoleimanifar2025,McGinleyHoMalz2025}. Related results establish unconditional separations for shallow-circuit learning and distribution learning~\cite{ZhangGongLiDeng2024,PirnayJerbiSeifertEisert2024}. Mid-circuit measurements and feed-forward also appear in recent measurement-driven and adaptive shallow-circuit proposals~\cite{CaoEisert2026,WuZhangZhangWangYuan2026}. Those proposals use standard sampling or discrete-logarithm hardness assumptions; by contrast, the bound proved here is unconditional for the stated depth- and fan-in-restricted classical model.

These results differ in the classical circuit classes they exclude, their noise tolerance, their use of inputs and interaction, and the complexity of their quantum circuits. {Our shared ancestor count improves the quadratic player prefactor in Ref.~\cite[Eq.~(4.85)]{Aasnaess2022} to linear dependence. For a fixed base game, that bound with standard disjoint-player repetition already gives $p_{\rm c}\leq\gamma^m+O(m^2K^D/N)$. At $m=\lceil\log_{1/\gamma}N\rceil$, this improves $O(K^D\log^2N/N)$ to $O(K^D\log N/N)$; both already vanish. For one-bit GHZ, the published coarse estimate $\min\{1,b^2K^D/N\}$ is one at $b=3$, $N=17$, $K=2$, and $D=1,2$; the linear count gives nontrivial 99-qubit thresholds. This comparison concerns the published estimates, not optimal classical success.}

\section{Finite games, local fraction, and circuit models}
\label{sec:sm-prelim}
\subsection{Nonlocal games}
\begin{definition}[Finite nonlocal game]
A finite $b$-player game is a tuple
\begin{equation}
 \Game=(\mathcal X_1,\ldots,\mathcal X_b,
 \mathcal A_1,\ldots,\mathcal A_b,\mu,V),
\end{equation}
where $\mu$ is a distribution over question tuples $x=(x_1,\ldots,x_b)$ and
\begin{equation}
 V(x,a)\in\{0,1\}
\end{equation}
is the acceptance predicate for answers $a=(a_1,\ldots,a_b)$.
\end{definition}

A classical strategy consists of shared randomness $\lambda$, independent of $x$, and local response distributions $p_j(a_j|x_j,\lambda)$. Its value is
\begin{equation}
 \cval(\Game)=\max
 \sum_{x,a,\lambda}\mu(x)p(\lambda)
 \prod_{j=1}^{b}p_j(a_j|x_j,\lambda)V(x,a).
\end{equation}
Because this expression is affine in the shared-randomness distribution, deterministic local response functions attain the optimum.

A finite-dimensional quantum strategy consists of a state $\rho$ on $\bigotimes_j\mathcal H_j$ and local {positive-operator-valued measures (POVMs)} $\{M^{(j)}_{a_j|x_j}\}_{a_j}$. {For each question, these measurement operators are positive and sum to the identity; the following trace gives the outcome probability.} Its conditional behavior is
\begin{equation}
 p(a|x)=\Tr\!\left[
 \rho\bigotimes_{j=1}^{b}M^{(j)}_{a_j|x_j}
 \right].
\end{equation}
We call $\Game$ a finite-dimensional pseudo-telepathy game when a finite-dimensional strategy attains $\qval(\Game)=1$ and $\cval(\Game)=\gamma<1$.

\subsection{Local and nonlocal fraction}

{The decomposition of a behavior into local and residual parts was introduced in the study of the local content of quantum correlations by Elitzur, Popescu, and Rohrlich~\cite{Elitzur1992}.} We use the following form, which also applies to signaling behaviors produced by a globally wired classical circuit.

\begin{definition}[Nonlocal fraction]
For a conditional behavior $p(a|x)$, define $\NLF(p)$ as the smallest $\alpha\in[0,1]$ for which
\begin{equation}
 p=(1-\alpha)p_{\rm L}+\alpha p_{\rm R},
 \label{eq:sm-nlf}
\end{equation}
where $p_{\rm L}$ is local and $p_{\rm R}$ is an arbitrary conditional behavior.
\end{definition}

For nonsignaling  behaviors this is the contextual fraction of Ref.~\cite{Abramsky2017}. The proof below only needs the elementary convex decomposition.

\begin{lemma}[Resource inequality]
\label{lem:sm-resource}
For every game $\Game$ and behavior $p$,
\begin{equation}
 \Prb_p[\mathrm{win}\ \Game]
 \leq \cval(\Game)+
 \bigl[1-\cval(\Game)\bigr]\NLF(p).
 \label{eq:sm-resource}
\end{equation}
\end{lemma}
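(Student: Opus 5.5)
The plan is to use the convex decomposition in the definition of $\NLF$ directly, together with linearity of the winning probability in the behavior. Fix a behavior $p$ and write $\alpha=\NLF(p)$. By definition there are a local behavior $p_{\rm L}$ and an arbitrary conditional behavior $p_{\rm R}$ with $p=(1-\alpha)p_{\rm L}+\alpha p_{\rm R}$.

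The winning probability is the functional
\begin{equation}
 \Prb_p[\mathrm{win}\ \Game]=\sum_{x,a}\mu(x)\,p(a|x)\,V(x,a),
\end{equation}
which is linear in $p$. It therefore splits as $(1-\alpha)\Prb_{p_{\rm L}}[\mathrm{win}]+\alpha\Prb_{p_{\rm R}}[\mathrm{win}]$.

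The two terms are bounded separately:
\begin{itemize}
\item Since $p_{\rm L}$ is local, it is a mixture over shared randomness of products of local response distributions. By the definition of $\cval(\Game)$, and the remark that the expression is affine in the shared-randomness distribution, its winning probability is at most $\cval(\Game)$.
\item For $p_{\rm R}$ we only use the trivial bound $\Prb_{p_{\rm R}}[\mathrm{win}]\leq1$, since $V\in\{0,1\}$ and $p_{\rm R}(\cdot|x)$ is a probability distribution.
\end{itemize}

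Combining these gives $\Prb_p[\mathrm{win}]\leq(1-\alpha)\cval+\alpha=\cval+(1-\cval)\alpha$, which is Eq.~\eqref{eq:sm-resource}.

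The only point needing care is that the infimum defining $\NLF$ is attained, so that a decomposition with exactly $\alpha=\NLF(p)$ exists. I would handle this in one of two ways:
\begin{itemize}
\item \emph{Compactness.} For a finite game, the set of local behaviors is a polytope and the set of all conditional behaviors is compact. The feasible $\alpha$ therefore form a closed set, and the minimum is attained.
\item \emph{Limiting argument.} Alternatively, apply the bound for every feasible $\alpha'>\NLF(p)$ and let $\alpha'\downarrow\NLF(p)$. The right-hand side is continuous and increasing in $\alpha'$, since $1-\cval\geq0$, so the bound passes to the limit.
\end{itemize}
I would use the second route, because it avoids the topological check entirely.
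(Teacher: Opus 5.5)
Your proof is correct and is essentially the paper's argument: take the decomposition with $\alpha=\NLF(p)$, use linearity of the winning probability, bound the local part by $\cval(\Game)$ and the residual by one. Your remark on attainment of the minimum is a harmless extra that the paper leaves implicit by defining $\NLF(p)$ as the smallest feasible $\alpha$. Either your limiting argument or your compactness argument settles that point.
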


\begin{proof}
Take a decomposition in Eq.~\eqref{eq:sm-nlf} with $\alpha=\NLF(p)$. Game success is linear in $p$, the local part wins with probability at most $\cval(\Game)$, and the residual wins with probability at most one. Therefore
\begin{align}
 \Prb_p[\mathrm{win}]
 &\leq(1-\alpha)\cval(\Game)+\alpha\\
 &=\cval(\Game)+[1-\cval(\Game)]\alpha.
\end{align}
\end{proof}

\subsection{Classical circuits and lightcones}

A Boolean circuit is a directed acyclic graph whose internal gates have fan-in at most $K$. Its depth is the largest number of gates on an input-to-output path. Fan-out is unrestricted. For an output wire $z$, let $\LC(z)$ be the set of original input wires having a directed path to $z$.

\begin{lemma}[Fan-in lightcone bound]
\label{lem:sm-lightcone}
If a circuit has depth $D$ and fan-in $K$, then
\begin{equation}
 |\LC(z)|\leq K^D.
\end{equation}
For an answer block containing at most $a$ output wires, the union of its backward lightcones has at most $aK^D$ input wires.
\end{lemma}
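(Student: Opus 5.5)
The plan is an induction on the depth $D$, tracking the set of input wires that can reach a wire through a directed path. For each wire $w$ of the circuit, let $d(w)$ be its depth: input wires have depth $0$, and a gate output has depth one more than the maximum depth of its inputs. I will prove the stronger claim that every wire $w$ with $d(w)\le t$ satisfies $|\LC(w)|\le K^{t}$. For $t=0$ the wire is itself an input wire, so $\LC(w)=\{w\}$ and $|\LC(w)|=1=K^0$.

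For the inductive step, let $w$ be the output of a gate $g$ with $d(w)\le t+1$. The gate $g$ has at most $K$ input wires $w_1,\ldots,w_k$, $k\le K$, each of depth at most $t$. Any directed path from an original input to $w$ must pass through $g$, so its last edge enters $g$ through some $w_i$. Hence
\begin{equation}
 \LC(w)=\bigcup_{i=1}^{k}\LC(w_i),
\end{equation}
and the induction hypothesis gives $|\LC(w)|\le kK^{t}\le K^{t+1}$.

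Unrestricted fan-out plays no role here. Fan-out only lets several gates share a predecessor, which can only shrink the union, and the bound concerns backward cones, which are governed by fan-in alone. Since every output wire of a depth-$D$ circuit has depth at most $D$, this gives $|\LC(z)|\le K^D$.

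For an answer block $\{z_1,\ldots,z_{a'}\}$ with $a'\le a$ output wires, subadditivity of cardinality under unions gives
\begin{equation}
 \Bigl|\bigcup_{i}\LC(z_i)\Bigr|\le\sum_i|\LC(z_i)|\le aK^D.
\end{equation}

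The only point needing care is the model convention. In the online response circuit, the "input wires" should be the fresh question wires, with precomputed data and randomness treated as constants or hard-wired inputs that are not counted in $\LC$. Depth must also be measured from those question wires, so that a long pre-question computation does not inflate the count. With depth defined as the longest path from a question wire, wires not reachable from any question wire have empty $\LC$, and the induction goes through unchanged. I expect no real obstacle beyond fixing this convention consistently with the competitor model of the main text.
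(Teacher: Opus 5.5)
Your proof is correct and follows the paper's argument: an induction on depth shows that every wire has at most $K^d$ input ancestors, because a gate's lightcone is the union of the lightcones of at most $K$ predecessors, and a union bound over the at most $a$ answer wires then gives $aK^D$. Your convention of measuring depth and lightcones from the fresh question wires only matches how the paper applies the lemma in the collision bound, where it counts the fresh-input ancestors of the round-two segment.
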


\begin{proof}
At depth zero, a wire has at most one original ancestor. If every wire after $d$ layers has at most $K^d$ ancestors, a gate in the next layer takes at most $K$ inputs and therefore has at most $K^{d+1}$ original ancestors. Induction gives the first claim; a union bound over the $a$ answer wires gives the second.
\end{proof}

For a host graph $H$, a range-$r$ circuit places every wire at a graph vertex and allows a gate only when all input and output locations are within graph distance $r$ of one another. Let
\begin{equation}
 B_H(v,R)=\{u:\dist_H(u,v)\leq R\},\qquad
 \beta_H(R)=\max_v|B_H(v,R)|.
\end{equation}

\begin{lemma}[Geometric lightcone bound]
\label{lem:sm-geo-lightcone}
In a range-$r$, depth-$D$ circuit, every input in the backward lightcone of an output at $v$ is located in $B_H(v,rD)$.
\end{lemma}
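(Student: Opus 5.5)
We prove the statement by induction on depth, in the same way as the fan-in lightcone bound (Lemma~\ref{lem:sm-lightcone}), but tracking the locations of ancestors rather than their number. The inductive claim is the following. If a wire $w$ is produced at layer $d$ and placed at vertex $u$, then every original input wire in the backward lightcone of $w$ is located in $B_H(u,rd)$. Here we take the layer of a wire to be the length of the longest gate path from an input to it, so that inputs have layer $0$ and an output of a depth-$D$ circuit has layer at most $D$. Applying the claim to the output at $v$ with $d\le D$ then gives containment in $B_H(v,rd)\subseteq B_H(v,rD)$, which is the statement.

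\emph{Base case.} At layer zero the wire is itself an original input at $u$, so its lightcone is $\{w\}$. This set lies in $B_H(u,0)$.

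\emph{Inductive step.} Suppose $w$ at vertex $u$ is the output of a gate $g$ in layer $d+1$. Each input wire $w'$ of $g$ has layer at most $d$ and sits at some vertex $u'$. By the range-$r$ condition, all input and output locations of $g$ lie within graph distance $r$ of one another, so $\dist_H(u',u)\le r$. By the induction hypothesis, every original ancestor of $w'$ lies at some vertex $s$ with $\dist_H(s,u')\le rd$. The triangle inequality for the graph metric then gives $\dist_H(s,u)\le r(d+1)$. The backward lightcone of $w$ is the union of the lightcones of the inputs of $g$, so it lies in $B_H(u,r(d+1))$.

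The only delicate point is bookkeeping. Fan-out wires may be reused across layers, and an input of $g$ may come from a much earlier layer. The induction is therefore over the layer index of each wire, with the hypothesis holding for every wire of layer at most $d$. Because $B_H(u',rd')\subseteq B_H(u',rd)$ whenever $d'\le d$, this monotonicity handles the reuse. A second point is that the range-$r$ condition must be read as pairwise distance between the gate's input and output locations; that is exactly the definition given. No other obstacle arises, since the argument is purely causal and uses neither fan-in nor randomness.
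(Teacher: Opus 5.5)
Your proposal is correct and follows the same argument as the paper: the paper's proof is the one-line version of your induction, in which each gate layer moves information by at most distance $r$ and the triangle inequality over $D$ layers gives distance at most $rD$. Your handling of the layer index and of wires reused across layers supplies the bookkeeping that the paper leaves implicit.
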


\begin{proof}
After one layer, information moves by at most distance $r$. Applying the triangle inequality inductively over $D$ layers gives distance at most $rD$.
\end{proof}

\section{Exact amplification with disjoint players}
\label{sec:sm-product}

{{We recall the standard disjoint-player product construction used in Ref.~\cite[Definition~3.10]{CoudronStarkVidick2021}. Each copy uses a disjoint set of players.} Once shared randomness is fixed, the copy-winning events are functions of independent question tuples, so the value factorizes directly. No nontrivial parallel-repetition theorem is invoked.}

\begin{definition}[Disjoint-player repetition]
Let $\Game$ have player set $[b_0]$. The game $\Game^{\sqcup m}$ has player set $[m]\times[b_0]$. For every copy $s\in[m]$, the referee independently samples $x^{(s)}\sim\mu$, sends $x_j^{(s)}$ to the distinct player $(s,j)$, and receives $a_j^{(s)}$. The verifier accepts iff every copy satisfies the base predicate.
\end{definition}

\begin{theorem}[Exact disjoint-player product]
\label{thm:sm-product}
For every finite game and integer $m\geq1$,
\begin{equation}
 \cval(\Game^{\sqcup m})=\cval(\Game)^m.
 \label{eq:sm-product}
\end{equation}
If $\Game$ has a finite-dimensional perfect quantum strategy, then so does $\Game^{\sqcup m}$.
\end{theorem}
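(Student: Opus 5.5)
The proof has two inequalities for the classical value and a tensor-product construction for the quantum part.

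\emph{Lower bound $\cval(\Game^{\sqcup m})\geq\cval(\Game)^m$.} Fix an optimal deterministic local strategy $(f_1,\ldots,f_{b_0})$ for $\Game$; one exists because the value is affine in the shared-randomness distribution. Let player $(s,j)$ answer $f_j(x^{(s)}_j)$. The copies use independent question tuples $x^{(s)}\sim\mu$. Copy $s$ is won iff $V(x^{(s)},f(x^{(s)}))=1$, and these events depend on independent variables. Hence the probability that every copy is won is $\cval(\Game)^m$.

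\emph{Upper bound $\cval(\Game^{\sqcup m})\leq\cval(\Game)^m$.} By the same affinity argument, it suffices to consider deterministic strategies. Under such a strategy, player $(s,j)$ answers $g_{s,j}(x^{(s)}_j)$, a function of its own question only. Let $W_s$ be the event that copy $s$ is won. Then $W_s$ is a function of $x^{(s)}$ alone, so the events $W_1,\ldots,W_m$ are independent. Moreover, $(g_{s,1},\ldots,g_{s,b_0})$ is a deterministic local strategy for $\Game$, so $\Prb[W_s]\leq\cval(\Game)$. Therefore $\Prb[\bigcap_s W_s]=\prod_s\Prb[W_s]\leq\cval(\Game)^m$.

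For a randomized strategy with shared $\lambda$, we can instead condition on $\lambda$ and apply the same bound pointwise. Averaging over $\lambda$ then finishes the argument. The step that needs care is this conditioning: shared randomness could correlate the copies, and it is fixing $\lambda$, together with the local answer structure, that makes the copy-winning events independent.

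\emph{Quantum part.} Let $(\rho,\{M^{(j)}_{a_j|x_j}\})$ be a perfect strategy for $\Game$. Use the state $\rho^{\otimes m}$, with copy $s$ on the registers of players $(s,1),\ldots,(s,b_0)$, and let player $(s,j)$ apply $M^{(j)}$. This is a valid finite-dimensional strategy: the total dimension is $(\dim)^m$, and each player still applies a local POVM. The behavior factorizes as $p(a|x)=\prod_s p_{\Game}(a^{(s)}|x^{(s)})$. Every $x^{(s)}$ in the support of $\mu$ is won with probability one, so every copy is won almost surely, and the product strategy is perfect for $\Game^{\sqcup m}$.
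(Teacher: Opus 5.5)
Your proof is correct and follows the paper's argument essentially step for step. Both reduce to deterministic strategies by conditioning on the shared randomness, use the fact that each copy's winning event depends only on its own independently sampled question tuple and has probability at most $\cval(\Game)$, attain $\cval(\Game)^m$ by playing an optimal base strategy in every copy, and obtain the quantum part by giving each copy its own tensor factor of the perfect base strategy.
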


\begin{proof}
Write $\gamma=\cval(\Game)$. First fix a deterministic strategy for $\Game^{\sqcup m}$. Player $(s,j)$ appears only in copy $s$, and therefore its answer is a function only of $x_j^{(s)}$. Let $W_s$ be the event that copy $s$ is won. The event $W_s$ depends only on the independently sampled question tuple $x^{(s)}$, hence the events $W_1,\ldots,W_m$ are independent. The responses in copy $s$ form a valid deterministic strategy for the base game, so $\Prb(W_s)\leq\gamma$. Thus
\begin{equation}
 \Prb(W_1\cap\cdots\cap W_m)
 =\prod_{s=1}^{m}\Prb(W_s)\leq\gamma^m.
\end{equation}
Conditioning on shared randomness makes a randomized strategy deterministic, so averaging preserves the upper bound. Conversely, using an optimal base strategy independently in every copy attains $\gamma^m$.

For the quantum statement, give every copy a separate tensor factor of the base state and let its distinct players use the base measurements. Every copy wins with certainty.
\end{proof}

In the compiled processor, this product is invoked only after conditioning on the sites, random tape, and pre-question data for which no cross-player lightcone collision occurs. It does not assume that the device runs independent copies; shared memory and cross-copy wiring are allowed by the model in Sec.~\ref{sec:sm-soundness}.

For $m$ GHZ copies, $b=3m$ and the values are $(\cval,\qval)=((3/4)^m,1)$. For $m$ Magic-Square copies, $b=2m$ and the values are $((8/9)^m,1)$. For the qutrit $5\times9$ game, the values are $((44/45)^m,1)$.

\section{Graph teleportation and exact quantum completeness}
\label{sec:sm-teleport}

\subsection{Teleportation convention}

We fix the phases in the standard teleportation protocol~\cite{Bennett1993} explicitly because the measurement correction in round two depends on them. Fix a local dimension $d\geq2$, let $\omega=e^{2\pi i/d}$, and define
\begin{equation}
 X\ket j=\ket{j+1\bmod d},\qquad
 Z\ket j=\omega^j\ket j.
\end{equation}
Let
\begin{equation}
 \ket\Phi=\frac{1}{\sqrt d}\sum_{j=0}^{d-1}\ket j\ket j,
 \qquad
 \ket{\beta_{a,b}}=(\id\otimes X^aZ^b)\ket\Phi.
\end{equation}
The $d^2$ Bell states form an orthonormal basis. Indeed,
\begin{equation}
 \langle\beta_{a,b}|\beta_{a',b'}\rangle
 =\delta_{a,a'}\frac{1}{d}
 \sum_{j=0}^{d-1}\omega^{(b'-b)j}
 =\delta_{a,a'}\delta_{b,b'}.
\end{equation}

\begin{lemma}[Teleportation identity]
\label{lem:sm-teleport}
For every $\ket\psi=\sum_j\psi_j\ket j$,
\begin{equation}
 (\langle\beta_{a,b}|_{AB}\otimes\id_C)
 (\ket\psi_A\otimes\ket\Phi_{BC})
 =\frac{1}{d}X^aZ^{-b}\ket\psi_C.
 \label{eq:sm-teleport}
\end{equation}
Every outcome has probability $1/d^2$.
\end{lemma}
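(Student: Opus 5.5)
The plan is a direct computation in the computational basis. I expand the Bell bra and then reduce the product with the Bell state on $BC$ to an inner product of basis vectors.

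First, write $\ket{\beta_{a,b}}=(\id\otimes X^aZ^b)\ket\Phi=\frac{1}{\sqrt d}\sum_j \omega^{bj}\ket j\ket{j+a}$. This uses $Z^b\ket j=\omega^{bj}\ket j$ followed by $X^a\ket j=\ket{j+a}$, with all indices taken mod $d$. Its bra is therefore $\langle\beta_{a,b}|=\frac{1}{\sqrt d}\sum_j\omega^{-bj}\langle j|_A\langle j+a|_B$.

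Second, expand $\ket\psi_A\otimes\ket\Phi_{BC}=\frac{1}{\sqrt d}\sum_{k,l}\psi_k\ket k_A\ket l_B\ket l_C$. Applying the bra on $AB$ forces $k=j$ and $l=j+a$. This gives
\[
\frac1d\sum_j\omega^{-bj}\psi_j\ket{j+a}_C .
\]

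Third, recognize this vector as $\frac1d X^aZ^{-b}\ket\psi_C$. Indeed, $Z^{-b}$ multiplies $\ket j$ by $\omega^{-bj}$, and $X^a$ then shifts $\ket j$ to $\ket{j+a}$. This order is consistent with the Bell-state convention above. That proves Eq.~\eqref{eq:sm-teleport}.

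Finally, for the probability statement, the post-measurement (unnormalized) vector has squared norm $\frac1{d^2}\|X^aZ^{-b}\psi\|^2=\frac1{d^2}$, because $X$ and $Z$ are unitary and $\psi$ is normalized. So every one of the $d^2$ outcomes occurs with probability $1/d^2$. Since the Bell states form an orthonormal basis, as computed just before the lemma, these probabilities sum to one, which is a consistency check.

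The only real risk is sign and order bookkeeping: whether the correction is $X^aZ^{-b}$ or $Z^{-b}X^a$, and the phase conventions. I will control this by keeping every index explicit mod $d$ throughout, rather than invoking operator identities.
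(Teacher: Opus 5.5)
Your proposal is correct and follows the paper's proof essentially step for step. Like the paper, you expand the Bell bra and $\ket\psi_A\ket\Phi_{BC}$ in the computational basis, let the inner products force the indices to match, identify $\frac1d\sum_j\omega^{-bj}\psi_j\ket{j+a}$ as $\frac1d X^aZ^{-b}\ket\psi$, and read off the squared norm $1/d^2$, using normalization of $\psi$ exactly as the paper does implicitly.
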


\begin{proof}
Expand
\begin{equation}
 \ket\psi_A\ket\Phi_{BC}
 =\frac{1}{\sqrt d}\sum_{j,k}\psi_j\ket j_A\ket k_B\ket k_C
\end{equation}
and
\begin{equation}
 \langle\beta_{a,b}|_{AB}
 =\frac{1}{\sqrt d}\sum_{\ell=0}^{d-1}
 \omega^{-b\ell}\langle\ell|_A\langle\ell+a|_B.
\end{equation}
The inner product enforces $j=\ell$ and $k=\ell+a$, yielding
\begin{equation}
 \frac{1}{d}\sum_{\ell}\psi_\ell\omega^{-b\ell}\ket{\ell+a}
 =\frac{1}{d}X^aZ^{-b}\ket\psi.
\end{equation}
The squared norm is $1/d^2$.
\end{proof}

The same identity holds when the teleported register is entangled with an arbitrary reference: conditioned on outcome $(a,b)$, the endpoint state is conjugated by
\begin{equation}
 P(a,b)=X^aZ^{-b}.
\end{equation}
This follows first for a pure state $\sum_j\ket{\xi_j}_R\ket j_A$ by the same component calculation and then for a mixed state by purification.

{For a root-to-terminal path, orient every spanning-tree edge from parent to child and order each resource pair as (parent-side half, child-side half). At each hop, the Bell-basis convention above takes the incoming logical register as subsystem $A$ and the parent-side half of the next outward edge as subsystem $B$.}

\begin{lemma}[Path teleportation]
\label{lem:sm-path}
Let $v_0,v_1,\ldots,v_\ell$ be a path. Put the input qudit at $v_0$ and one maximally entangled pair on every edge. {At every hop, Bell-measure the incoming logical register with the parent-side half of the next edge, in that order;} at an internal vertex these are the child-side half of the preceding edge and the parent-side half of the following edge. If the outcomes are $s_1,\ldots,s_\ell$, the endpoint state is conjugated by
\begin{equation}
 Q=P(s_\ell)P(s_{\ell-1})\cdots P(s_1).
 \label{eq:sm-path-frame}
\end{equation}
All Bell measurements can be carried out simultaneously.
\end{lemma}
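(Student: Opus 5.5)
I will prove Lemma~\ref{lem:sm-path} by induction on the path length $\ell$, using Lemma~\ref{lem:sm-teleport} in its reference-entangled form at every step. The simultaneity claim then follows because the Bell measurements commute.

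\emph{Sequential statement.} For $\ell=1$ the claim is exactly the remark after Lemma~\ref{lem:sm-teleport}. The input register at $v_0$ (subsystem $A$, possibly entangled with a reference $R$) is Bell-measured with the parent-side half of edge $(v_0,v_1)$ (subsystem $B$). Conditioned on outcome $s_1$, the child-side half at $v_1$ holds the input state conjugated by $P(s_1)$.

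For the inductive step, suppose that after the first $k$ hops the child-side half at $v_k$ holds the original state, jointly with $R$, conjugated by $Q_k=P(s_k)\cdots P(s_1)$. That is, it holds $(\id_R\otimes Q_k)\ket{\xi}$ for a purification $\ket\xi$. Treat this register as the new $A$, and the parent-side half of edge $(v_k,v_{k+1})$ as $B$. The reference is now $R$ together with all untouched pairs. Applying Lemma~\ref{lem:sm-teleport} with this enlarged reference multiplies on the left by $P(s_{k+1})$, which gives $Q_{k+1}=P(s_{k+1})Q_k$. This matches the ordering in Eq.~\eqref{eq:sm-path-frame}. Each outcome occurs with probability $1/d^2$ independently of the state, so the conditional states are well defined for every outcome string. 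The orientation convention (incoming logical register first, parent-side half second) is exactly what keeps the frame $X^aZ^{-b}$ rather than a transposed variant; I will check this against the sign in Lemma~\ref{lem:sm-teleport}.

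\emph{Simultaneity.} The $\ell$ Bell measurements act on pairwise disjoint pairs of registers: the input with the parent half of edge $1$, and then, for each $k\ge1$, the child half of edge $k$ with the parent half of edge $k+1$. Their projectors therefore commute. The joint distribution of outcomes and the post-measurement state of the endpoint register are then the same for any ordering, including performing all of them in one parallel layer. Hence the sequential analysis gives the conditional endpoint state for the simultaneous protocol as well. Formally, the post-measurement state is $\bigotimes_k(\langle\beta_{s_k}|)$ applied to the initial product of the input and the edge pairs, and commuting tensor factors can be contracted in hop order.

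\emph{Main obstacle.} The one delicate step is applying Lemma~\ref{lem:sm-teleport} when the ``input'' is itself half of an entangled state produced by earlier measurements. This is handled by the reference-system version of that lemma, with the enlarged reference described above.

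I will also spell out that the product of Weyl operators is again, up to a global phase, a Weyl operator $X^{A}Z^{B}$. This step is bookkeeping only and does not affect the lemma. It uses $ZX=\omega XZ$, so the verifier can compute a single frame $(A,B)$ from $s_1,\ldots,s_\ell$. The global phase is irrelevant because the frame acts by conjugation.
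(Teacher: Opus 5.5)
Your proposal is correct and follows essentially the same route as the paper. The paper's proof also argues that the Bell measurements act on disjoint registers and commute, analyzes the joint post-measurement state in path order, and applies Lemma~\ref{lem:sm-teleport} inductively to obtain $Q=P(s_\ell)\cdots P(s_1)$; your inductive step with an enlarged reference spells out what that short proof leaves implicit.
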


\begin{proof}
The Bell measurements act on pairwise disjoint registers and therefore commute. Their joint post-measurement state may be analyzed in path order. Applying Lemma~\ref{lem:sm-teleport} inductively composes the correction operators from left to right, giving Eq.~\eqref{eq:sm-path-frame}.
\end{proof}

Because $Z^uX^v=\omega^{uv}X^vZ^u$, every finite product of the $P(a,b)$ is, up to global phase, one of the $d^2$ operators $X^AZ^{-B}$. Thus the verifier sends a constant-size frame label when $d$ is fixed.

\begin{lemma}[Corrected measurement]
\label{lem:sm-corrected-measurement}
If the endpoint state is $P\rho P^\dagger$ and the desired POVM is $\{M_a\}_a$, measuring $\{PM_aP^\dagger\}_a$ reproduces the desired distribution.
\end{lemma}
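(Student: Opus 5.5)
The plan is to compute the outcome distribution of the rotated measurement directly with the cyclic property of the trace and the unitarity of $P$. The operators $\{PM_aP^\dagger\}_a$ form a valid POVM because conjugation by a unitary preserves positivity: for any vector $\ket\phi$, $\langle\phi|PM_aP^\dagger|\phi\rangle=\langle P^\dagger\phi|M_a|P^\dagger\phi\rangle\geq0$. It also preserves the completeness relation, since $\sum_a PM_aP^\dagger=P\id P^\dagger=\id$. The frame $P=X^AZ^{-B}$ is unitary, as a product of the unitary shift and clock operators; any global phase accumulated when composing frames (Lemma~\ref{lem:sm-path}) cancels in $P(\cdot)P^\dagger$.

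The key computation comes next. When the endpoint state is $P\rho P^\dagger$, the probability of outcome $a$ under the corrected POVM is
\begin{equation}
 \Tr\bigl[PM_aP^\dagger\,P\rho P^\dagger\bigr]
 =\Tr\bigl[PM_a\rho P^\dagger\bigr]
 =\Tr\bigl[M_a\rho\bigr],
\end{equation}
using $P^\dagger P=\id$ and then cyclicity. This is exactly the distribution of the desired POVM on $\rho$.

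To use the lemma in the multipartite setting, where the teleported register is entangled with other players' registers, I would state the same computation for $\rho$ on $\mathcal H_j\otimes\mathcal H_{\rm rest}$. There the endpoint state is $(P\otimes\id)\rho(P\otimes\id)^\dagger$, and the corrected effect $PM_aP^\dagger\otimes N$ is measured for an arbitrary effect $N$ on the rest. The same cancellation gives $\Tr[\rho(M_a\otimes N)]$, so the joint statistics with every other player's outcomes are preserved. Applying this simultaneously for all players is legitimate because the frames act on distinct tensor factors.

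For the Pauli case, I would add a remark matching the bit-flip rule. If $M_a$ is the spectral projector of $X$ or $Y$, then $PM_aP^\dagger$ is a projector of $\pm X$ or $\pm Y$, since Paulis commute or anticommute. This means the corrected measurement is the original one with a relabelled outcome, implemented classically as $r\oplus B\oplus Ax$. None of these steps is a real obstacle. The only point needing care is the phase and ordering convention of $P$, which affects the classical relabelling but not the lemma itself.
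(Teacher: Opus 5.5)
Your proposal is correct and is the paper's proof: the same one-line trace identity, using $P^\dagger P=\id$ and cyclicity. Your extra checks are accurate but not included in the paper's proof of this lemma. These are that the conjugated effects form a valid POVM, the tensor-factor version for entangled registers (which the paper invokes separately in Theorem~\ref{thm:sm-completeness}), and the Pauli relabelling $r\oplus B\oplus Ax$.
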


\begin{proof}
For every outcome,
\begin{equation}
 \Tr[(P\rho P^\dagger)(PM_aP^\dagger)]
 =\Tr(\rho M_a).
\end{equation}
\end{proof}

\subsection{Hardware layout}

Let $H=(V,E,r_0)$ be a connected rooted graph with $N=|V|$ and maximum degree $\Delta$. Fix a {rooted shortest-path spanning tree}. For every terminal $v$, let $\pi(v)$ be the unique root-to-$v$ path in that tree.

For every player $j$ and edge $e=\{u,v\}$, introduce two $d$-dimensional half-edge registers, one at $u$ and one at $v$. The game register of player $j$ begins at the root. Each edge half-pair {that is used in a trial} is prepared in $\ket\Phi$. Distinct edges use distinct registers, so all edge entanglement can be prepared in parallel even when edges share a graph vertex. The local register count is $O(b\Delta)$.

\subsection{Two-round task}

\begin{construction}[Graph-distributed game]
\label{con:sm-compiler}
The verifier and device proceed as follows.
\begin{enumerate}
\item {Using private verifier randomness independent of the device's complete random tape and initial state,} independently for every player $j\in[b]$, sample a terminal $T_j$ uniformly from $V$.
\item In round one, instruct the device to Bell-measure along $\pi(T_j)$ and return all Bell labels. {An invalid, missing, or late required response causes immediate rejection and is never postselected.}
\item Compute the endpoint frame $P_j$ from Eq.~\eqref{eq:sm-path-frame}; set $P_j=\id$ when $T_j=r_0$.
\item {Using fresh private verifier randomness independent of the terminals, the device's complete random tape, the complete round-one transcript, and all pre-round-two memory,} sample the base-game question tuple $x$ {(or reveal it, if it was generated earlier and kept private; see Sec.~\ref{sec:sm-soundness})}.
\item {In round two, for each fixed block $(j,u)$, supply an input bundle
$Y_{j,u}=\operatorname{enc}(x_j,P_j)$ when $u=T_j$ and a fixed null encoding otherwise. The null encoding is distinct from every valid question-frame encoding. These bundles are the only device inputs whose values depend on $x$.} Read the answer $a_j$ at the selected terminal.
\item Accept iff $V(x,a)=1$.
\end{enumerate}
\end{construction}

The external verifier's classical processing is not charged to device depth. The result is therefore a two-round interactive shallow-device separation, not an end-to-end constant-depth algorithm for the verifier.

\begin{theorem}[Exact quantum completeness]
\label{thm:sm-completeness}
Suppose $\Game$ has a finite-dimensional strategy of success $p_{\rm q}$. Assume the gate model contains the preparation and measurement operations of this fixed strategy exactly (Sec.~\ref{sec:sm-limitations}). Then $\Comp_H(\Game)$ has a bounded-fan-in quantum implementation that starts from a product state, has $O_{\Game,\Delta}(1)$ depth in each round including resource preparation in round one, and succeeds with probability exactly $p_{\rm q}$.
\end{theorem}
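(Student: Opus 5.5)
The proof is constructive. We exhibit the quantum device, check its depth round by round, and then compute the joint output distribution conditioned on the terminals and the Bell transcript.

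\emph{Implementation.} Fix the finite-dimensional strategy $(\rho,\{M^{(j)}_{a_j|x_j}\})$. Assume, as in Sec.~\ref{sec:sm-limitations}, that $\rho$ is prepared from a product state by a fixed circuit $C_\rho$ of constant depth. Place the $j$th factor of that circuit's output at the root copy for player $j$.

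In round one the device does three things in parallel. It runs $C_\rho$, which has depth $O_\Game(1)$. It prepares $\ket\Phi$ on every half-edge pair of every path $\pi(T_j)$. Each such preparation is a constant-depth circuit on two $d$-dimensional registers, and all of them run in parallel because distinct edges use disjoint registers. It then performs the Bell measurements along every $\pi(T_j)$ simultaneously, as Lemma~\ref{lem:sm-path} allows.

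\emph{Depth bound.} The delicate point is that the choice of which registers to entangle and measure depends on $T_j$. A vertex has degree at most $\Delta$ and holds $O(b\Delta)$ registers, so the selection flag at each vertex controls only $O(b\Delta)$ gates. Each vertex receives that flag as a classical input, so the whole round-one circuit has depth $O_{\Game,\Delta}(1)$. Alternatively, every edge pair can be prepared and measured unconditionally. The unused outcomes are then simply ignored, and the verifier computes frames only along $\pi(T_j)$. This variant avoids any dependence of the depth on which terminals were drawn.

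In round two, the device at site $T_j$ receives $\operatorname{enc}(x_j,P_j)$. It applies the frame-corrected measurement $\{P_jM^{(j)}_{a|x_j}P_j^\dagger\}$, which is one of finitely many fixed measurements indexed by the pair $(x_j,P_j)$. This is a constant-size controlled operation, so round two also has constant depth.

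\emph{Correctness.} Fix the terminals and a full round-one transcript $s$. Apply Lemma~\ref{lem:sm-path} player by player, keeping every other register as the reference system. This uses the remark after Lemma~\ref{lem:sm-teleport} that teleportation of a register entangled with a reference still yields the conjugated state. The post-measurement joint state on the terminal registers is
\begin{equation}
 \bigl(\textstyle\bigotimes_j P_j\bigr)\rho\bigl(\textstyle\bigotimes_j P_j\bigr)^\dagger .
\end{equation}
Every transcript occurs with probability $d^{-2\ell}$, independent of $\rho$. Applying Lemma~\ref{lem:sm-corrected-measurement} factor by factor, the tensor product of the corrected POVMs reproduces $\Tr[\rho\bigotimes_jM^{(j)}_{a_j|x_j}]$ exactly.

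The questions are sampled independently of the terminals and of $s$. Conditioned on any terminals and any transcript, the answer distribution therefore equals that of the original strategy, so the success probability conditioned on each such event equals $p_{\rm q}$. Averaging over these events gives exactly $p_{\rm q}$.

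The main obstacle is bookkeeping rather than mathematics. The frame composition must respect the parent-to-child orientation convention of Lemma~\ref{lem:sm-path}. The depth must also be shown to stay bounded independently of $N$: the only data depending on $N$, namely the path itself, enters as parallel classical control flags and never as a sequential chain of gates.
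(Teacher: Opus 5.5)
Your proof is correct and follows the paper's argument. Both prepare the game state and the Bell pairs on disjoint registers in constant depth, use the verifier's path flags to multiplex a bounded number of Bell-measurement choices per vertex (bounded $\Delta$), and perform the Bell measurements simultaneously via Lemma~\ref{lem:sm-path}. Both then apply the frame-conjugated POVM from a finite menu, use Lemma~\ref{lem:sm-corrected-measurement} conditioned on every terminal tuple and transcript, and average.

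One correction: drop the variant in which every edge is ``prepared and measured unconditionally.'' Unconditional preparation is fine; the paper's proof does exactly that. The Bell measurements, however, cannot be independent of the terminal. The register must be left unmeasured at $T_j$, and a branching vertex can Bell-measure its incoming half with only one child edge, so measuring everything would carry the register past $T_j$. The variant is also unnecessary, because the flag-controlled circuit already has depth independent of which terminals are drawn.
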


\begin{proof}
Prepare a purification of the fixed game state at the root by a constant-size unitary. Prepare every edge pair in $\ket\Phi$ using constant-arity gates. Since different copies and different half-edge pairs occupy disjoint registers, all preparations run in a number of layers independent of $N$.

In round one, the path flags given by the verifier select the Bell measurements. At an internal vertex, at most a constant number of child choices must be multiplexed because $\Delta$ is bounded. Operations at different vertices and for different players use distinct registers. Hence all selected Bell measurements have constant depth. Lemma~\ref{lem:sm-path} gives the conditional joint terminal state
\begin{equation}
 (P_1\otimes\cdots\otimes P_b)\rho
 (P_1\otimes\cdots\otimes P_b)^\dagger.
\end{equation}

If the desired measurement for player $j$ on question $x_j$ is $\{M^{(j)}_{a|x_j}\}_a$, round two measures
\begin{equation}
 \{P_jM^{(j)}_{a|x_j}P_j^\dagger\}_a.
\end{equation}
For a fixed game and dimension, the control alphabet is finite and compiles into constant depth. Applying Lemma~\ref{lem:sm-corrected-measurement} to every tensor factor shows that, conditioned on every terminal tuple and transcript, the answer distribution equals that of the original game strategy. Averaging changes nothing, so the success is exactly $p_{\rm q}$.
\end{proof}

\begin{corollary}[Depth and width under repetition]
\label{cor:sm-parallel}
For fixed $\Game$, the strategy for $\Game^{\sqcup m}$ prepares $m$ independent copies and executes all corresponding operations in parallel. Quantum depth and gate fan-in are independent of $m$; the width grows by a factor $m$.
\end{corollary}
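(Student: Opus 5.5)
The corollary follows by applying Theorem~\ref{thm:sm-completeness} to the product strategy of Theorem~\ref{thm:sm-product}, so I would derive it by checking that the constant in $O_{\Game,\Delta}(1)$ does not grow with $m$. The plan is to identify $\Comp_H(\Game^{\sqcup m})$ with $m$ side-by-side instances of $\Comp_H(\Game)$ as far as the quantum implementation is concerned. The teleportation and measurement schedule is defined player by player, and the product strategy uses a separate tensor factor of the base state for each copy.

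\emph{Step 1: the state preparation.} I would first write the joint state as $\rho^{\otimes m}$, with copy $s$ living on the root registers of players $(s,1),\ldots,(s,b_0)$. The constant-size unitary that prepares one purification acts only on that copy's registers. The $m$ preparations therefore act on pairwise disjoint supports and can be placed in the same layers, so their depth is that of a single copy. Likewise, each player receives its own copy of the host graph with fresh half-edge registers. The Bell-pair preparations for all $b=b_0m$ players thus run in one common constant-depth layer, exactly as in the proof of Theorem~\ref{thm:sm-completeness}.

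\emph{Step 2: the two rounds.} In round one, the selected Bell measurements for player $(s,j)$ touch only that player's registers. By Lemma~\ref{lem:sm-path} they are simultaneous within a path, and across players the supports are disjoint. Hence all measurements form a single constant-depth layer. The multiplexing at each vertex still involves at most $\Delta$ child choices per player, which does not depend on $m$. In round two, each player applies the frame-corrected measurement $\{P_{(s,j)}M^{(j)}_{a|x}P_{(s,j)}^\dagger\}$ from Lemma~\ref{lem:sm-corrected-measurement}. This is again drawn from a finite control alphabet fixed by $\Game$ and $d$, and it acts on disjoint registers. Every gate used is one of the gates from the single-copy compilation, so the gate fan-in is unchanged.

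\emph{Step 3: width and the expected obstacle.} The register count per copy is $b_0$ game registers plus $2b_0|E|$ half-edge registers. Summing over the copies multiplies the width by exactly $m$. The only point needing care is making sure that parallel execution does not conflict at a shared graph vertex. Registers of different players may sit at the same site, but they are distinct registers. The circuit model counts depth as layers of gates on disjoint wires, not as operations per site, so co-location costs nothing. I would state this explicitly, noting that the hardware layout already assigns distinct registers per player, so the local register count becomes $O(b\Delta)=O(m b_0\Delta)$ while the depth is unaffected.
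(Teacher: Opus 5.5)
Your proposal is correct and follows essentially the paper's route: the corollary rests on the tensor-product quantum strategy of Theorem~\ref{thm:sm-product} together with the disjoint-register parallelism argument inside the proof of Theorem~\ref{thm:sm-completeness}, and you rightly reopen that proof rather than citing its $O_{\Game,\Delta}(1)$ bound as a black box for $\Game^{\sqcup m}$, whose constant could a priori depend on $m$. Your explicit checks are what the claim needs: each player multiplexes over at most $\Delta$ children, each player's frame alphabet is fixed by $\Game$ and $d$, every gate comes from the single-copy compilation, and the width is exactly $m$ times the single-copy register count.
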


\begin{remark}[Preparation on selected paths only]
\label{rem:sm-selected}
{Since the path flags are supplied before round one, the device may prepare Bell pairs only on the edges of the selected paths, by classically controlled constant-depth preparations on the corresponding registers, and leave every other half-edge register in $\ket0$. Theorem~\ref{thm:sm-completeness} is unchanged, because the selected Bell measurements act only on prepared pairs. This reduces the number of entangled pairs per trial from $\Theta(mN)$ to $\sum_j\ell(T_j)$, where $\ell(T_j)$ is the length of $\pi(T_j)$, which is $O(m\log N)$ on a balanced tree, and avoids preparing unused Bell pairs, although idle errors and cross-talk may remain} (Sec.~\ref{sec:sm-limitations}). All $Q$ registers must nevertheless exist, so that every terminal is a distinct physical location.
\end{remark}

\section{Classical soundness from random terminals and lightcones}
\label{sec:sm-soundness}
\subsection{Classical circuit model}

{The classical device may be nonuniform and randomized. Let $R$ denote its complete random tape, including any round-two coins; $R$ is independent of the verifier's private randomness. Before the questions are revealed, the device may hold arbitrary shared randomness and arbitrary memory produced in round one. We charge the depth $D$ of the circuit segment from the fixed round-two input bundles to the answer outputs. For each input length and fixed $R=r$, the structural directed acyclic graph of this segment is fixed before the terminals and first-round transcript are revealed. Every backward lightcone below means reachability in this unspecialized structural graph, before any memory, transcript, or control-wire values are substituted. A switch, multiplexer, or universal-circuit selector that changes information flow must itself be implemented by gates and is charged to the depth and fan-in. Thus terminal-dependent rewiring is not free. The fresh question tuple $X$ is independent of the sigma-algebra generated by $(T,R)$, the round-one transcript, and the pre-round-two memory.}

{This independence is the only property of the questions that the proof uses. The verifier may therefore generate $X$ at any time, before or after round one, provided that nothing in the device's round-one operation or pre-round-two memory depends on it; ``fresh'' means inaccessible to the device before round two, not generated after round one in physical time. For a fixed public circuit, the absence of earlier access is a structural property of the circuit and needs no real-time random-number generation.}

{The fixed-wiring requirement is essential. If terminal-dependent rewiring were free, a classical device could use a zero frame transcript, route the selected GHZ question bits $x_A,x_B$ to player $C$'s selected output, and return $a=b=0$, $c=x_A\lor x_B$. On the promised tuples $000,011,101,110$, this wins with certainty using a fan-in-two, depth-one response. In the present model, selecting those sparse inputs is charged to the circuit.}

An answer contains at most $a$ output bits. There may be additional unused outputs, but the verifier reads only the answer block at the selected terminal.

\begin{definition}[Cross-player collision]
{Fix a terminal tuple $t=(t_1,\ldots,t_b)$ and complete random tape $r$. An ordered pair $(j,j')$, $j\neq j'$, is bad for $(t,r)$ if any wire in the selected bundle $Y_{j,t_j}$ lies in the unspecialized backward lightcone of an answer output at block $(j',t_{j'})$. The pair $(t,r)$ is good if no ordered player pair is bad.}
\end{definition}

\subsection{Collision probability}

\begin{lemma}[Collision bound for bounded-fan-in circuits]
\label{lem:sm-collision}
For independent uniform terminals in an $N$-vertex host graph,
\begin{equation}
 {\Prb_{T,R}[\mathrm{bad}]}
 \leq \delta_{\rm fan}:=
 \min\!\left\{1,
 \frac{abK^D}{N}
 \right\}.
 \label{eq:sm-collision}
\end{equation}
\end{lemma}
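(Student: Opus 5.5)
The plan is to fix the random tape $r$ first, bound the conditional collision probability over the terminals $T$ uniformly in $r$, and then average over $R$. Since $T$ is drawn from private verifier randomness independent of $R$, conditioning on $R=r$ leaves $T_1,\ldots,T_b$ independent and uniform on $V$. For fixed $r$ the structural graph of the round-two segment is fixed before $T$ is drawn. Hence every unspecialized backward lightcone is a deterministic set of input wires that does not depend on $T$. This is the point where the fixed-wiring assumption enters.

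Next I fix a target player $j'$ and a value $T_{j'}=v$. The answer block at $(j',v)$ has at most $a$ output wires. By Lemma~\ref{lem:sm-lightcone}, the union $L_{j',v}$ of their backward lightcones contains at most $aK^D$ input wires. For each $j\neq j'$, I define $S_{j,j'}(v)\subseteq V$ as the set of sites $u$ such that some wire of the bundle $Y_{j,u}$ lies in $L_{j',v}$. The bundles $Y_{j,u}$ for distinct pairs $(j,u)$ consist of disjoint input wires, so each wire of $L_{j',v}$ is charged to at most one pair $(j,u)$. This gives
\begin{equation}
 \sum_{j\neq j'}|S_{j,j'}(v)|\leq aK^D .
\end{equation}
The pair $(j,j')$ is bad exactly when $T_j\in S_{j,j'}(T_{j'})$. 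Conditional on $T_{j'}=v$ and $R=r$, each $T_j$ with $j\neq j'$ is still uniform and independent of $T_{j'}$. A union bound over $j$ then gives
\begin{equation}
 \Prb\bigl[\exists j\neq j':(j,j')\ \mathrm{bad}\bigm| T_{j'}=v,R=r\bigr]\leq \sum_{j\neq j'}\frac{|S_{j,j'}(v)|}{N}\leq\frac{aK^D}{N}.
\end{equation}
Averaging over $v$ and $r$ preserves this bound. A second union bound over the $b$ target players $j'$ yields $\Prb_{T,R}[\mathrm{bad}]\leq abK^D/N$, and the bound $1$ is trivial, which gives Eq.~\eqref{eq:sm-collision}.

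The main obstacle is justifying that the sets $S_{j,j'}(v)$ do not depend on the other terminals or on the round-one transcript. The transcript depends on $T$, and memory or control wires could carry terminal information. I will handle this by stressing that lightcones are taken in the unspecialized structural DAG, with no memory, transcript, or control values substituted. Any such values can only prune reachability, so they cannot enlarge the set of influencing input wires. This is why the argument never conditions on the transcript: the bad event is a function of $(T,r)$ alone. A secondary point is that the per-target sum uses the shared budget $aK^D$ across all other players, instead of $aK^D$ per player. That shared budget is exactly what gives the linear factor $b$ rather than $b^2$.
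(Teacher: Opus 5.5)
Your proposal is correct and matches the paper's proof essentially step for step. Like the paper, you condition on $T_{j'}=v$ and $R=r$ but not on the transcript, and you define the site sets $S_{j,j'}$ through the unspecialized lightcone. You then use disjointness of the input bundles to get the shared budget $\sum_{j\neq j'}|S_{j,j'}|\le aK^D$, take a union bound over source players, average over $v$ and $r$, take a second union bound over the $b$ targets, and cap at one; your point that this shared per-target budget is what gives a linear rather than quadratic factor in $b$ is the same one the paper emphasizes.
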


\begin{proof}
Fix a target player $j'$ and condition on $T_{j'}=v$ and $R=r$, but not on the first-round transcript, which may depend on the other terminals. Every $T_j$ with $j\neq j'$ remains uniform under this conditioning. For fixed $r$, the unspecialized response graph is independent of all terminals and wire values. For each $j\neq j'$, let $S_{j,j'}(v,r)$ be the set of vertices $u$ such that some wire in $Y_{j,u}$ is an ancestor of an answer output at $(j',v)$. These input blocks are disjoint across players and vertices. By Lemma~\ref{lem:sm-lightcone}, the selected answer block has at most $aK^D$ fresh-input ancestors in total, so
\begin{equation}
 \sum_{j\neq j'}|S_{j,j'}(v,r)|\leq aK^D.
\end{equation}
A collision into this answer occurs only if $T_j\in S_{j,j'}(v,r)$ for some $j\neq j'$. Hence
\begin{align}
 \Prb[\exists j\neq j':(j,j')\text{ bad}\mid T_{j'}=v,R=r]
 &\leq\sum_{j\neq j'}\frac{|S_{j,j'}(v,r)|}{N}\nonumber\\
 &\leq\frac{aK^D}{N}.
\end{align}
Average over $v$ and $R$, sum over the $b$ target players, and cap the result at one. All source players, including those in other copies of the base game, are included in the same ancestor count.
\end{proof}

\begin{lemma}[Geometric collision bound]
\label{lem:sm-geo-collision}
If the round-two circuit is range-$r$ local on $H$, then
\begin{equation}
 {\Prb_{T,R}[\mathrm{bad}]}
 \leq \delta_{\rm geo}:=
 \min\!\left\{1,
 \frac{b(b-1)\beta_H(rD)}{N}
 \right\}.
 \label{eq:sm-geo-collision}
\end{equation}
\end{lemma}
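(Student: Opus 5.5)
We bound the probability of a collision one ordered player pair at a time, using locality in place of the fan-in count from Lemma~\ref{lem:sm-collision}. The factor $b(b-1)$ in Eq.~\eqref{eq:sm-geo-collision} suggests a union bound over ordered pairs $(j,j')$ with $j\neq j'$, where each pair contributes at most $\beta_H(rD)/N$.

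\emph{Fixing a pair.} Fix an ordered pair $(j,j')$. Condition on $T_{j'}=v$ and on the random tape $R=r$. As in the proof of Lemma~\ref{lem:sm-collision}, we do not condition on the round-one transcript. Under this conditioning $T_j$ is still uniform on $V$, because the terminals are drawn independently and $R$ is independent of the verifier's randomness. For fixed $r$ the unspecialized structural graph of the range-$r$ response circuit is fixed.

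\emph{Applying locality.} Every input wire in the backward lightcone of an answer output at block $(j',v)$ is located in $B_H(v,rD)$, by Lemma~\ref{lem:sm-geo-lightcone}. We use that each input bundle $Y_{j,u}$ is placed at the vertex $u$ of its block, which is part of the hardware layout convention. So the pair $(j,j')$ is bad only if $T_j\in B_H(v,rD)$. Under the conditioning this event has probability at most $|B_H(v,rD)|/N\leq\beta_H(rD)/N$.

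\emph{Averaging and summing.} Average over $v$ and $r$. Then take a union bound over the $b(b-1)$ ordered pairs and cap the result at one. This gives Eq.~\eqref{eq:sm-geo-collision}.

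\emph{Main obstacle.} The one delicate point is to make sure the geometric bound cannot be improved to a linear-in-$b$ count like the fan-in bound. A single ball $B_H(v,rD)$ can contain the blocks of every other player $j$ simultaneously, so no shared budget exists across source players. The per-pair bound is therefore the natural one, and summing it over ordered pairs is valid without further argument. The only other point to check is that placement of the input bundle $Y_{j,u}$ at $u$ is part of the range-$r$ model. If the model did not fix this, I would state it as an explicit assumption of the geometric circuit model.
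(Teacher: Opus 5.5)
Your proof is correct and is essentially the paper's argument: condition on $T_{j'}=v$ and the tape, use Lemma~\ref{lem:sm-geo-lightcone} (with the bundle $Y_{j,u}$ located at $u$ and the answer block $(j',v)$ at $v$, which the paper also uses implicitly) to confine the bad event to $T_j\in B_H(v,rD)$, bound that by $\beta_H(rD)/N$ via independence and uniformity, then average, take a union bound over the $b(b-1)$ ordered pairs, and cap at one. The only blemish is notational: writing the tape value as $R=r$ collides with the range parameter $r$, which is why the paper writes $R=\rho$ there.
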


\begin{proof}
{Fix $j\neq j'$ and condition on $T_{j'}=v$ and $R=\rho$. Lemma~\ref{lem:sm-geo-lightcone} implies that a wire in the selected bundle $Y_{j,T_j}$ can affect the answer at $(j',v)$ only if $T_j\in B_H(v,rD)$. Independence and uniformity give probability at most $\beta_H(rD)/N$. Average over $v$ and $R$, sum over ordered pairs, and cap at one.}
\end{proof}

\subsection{Locality without cross-player collisions}

\begin{lemma}[Locality without collisions]
\label{lem:sm-good-local}
{Condition on a good pair $(T,R)=(t,r)$, the complete first-round transcript, and the pre-round-two memory. The resulting round-two answer behavior is a deterministic local strategy for the base game.}
\end{lemma}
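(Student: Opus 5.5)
Once $(t,r)$, the first-round transcript $\tau$, and the pre-round-two memory $\sigma$ are fixed, every gate in the round-two segment computes a fixed Boolean function of its inputs. The only device inputs that still vary are the bundles $Y_{j,u}$. These depend on the fresh question tuple $X$, which by assumption is independent of $(T,R,\tau,\sigma)$. So the plan is to show that each selected answer $a_{j'}$, read at block $(j',t_{j'})$, is a deterministic function $f_{j'}(x_{j'})$ of its own player's question alone. The family $(f_{j'})_{j'}$ is then a deterministic local strategy, played on questions distributed according to the original game distribution.

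\textbf{Step 1: which inputs can reach an answer.} I would fix $j'$ and list the input wires in the unspecialized backward lightcone of the answer outputs at $(j',t_{j'})$. These fall into four classes:
\begin{itemize}
\item wires carrying $\sigma$, $\tau$, or $r$, which are now constants;
\item wires in bundles $Y_{j,u}$ with $u\neq t_j$, which carry the fixed null encoding for every $x$;
\item wires in the selected bundle $Y_{j',t_{j'}}=\operatorname{enc}(x_{j'},P_{j'})$;
\item wires in selected bundles $Y_{j,t_j}$ with $j\neq j'$.
\end{itemize}
Goodness of $(t,r)$ says exactly that the fourth class is empty, since a nonempty fourth class would make $(j,j')$ a bad ordered pair.

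\textbf{Step 2: substitution.} The frame $P_{j'}$ is a function of $\tau$ and $t_{j'}$ via Eq.~\eqref{eq:sm-path-frame}, so it is a constant under the conditioning. Substituting all the constants into the structural graph, each answer output becomes a function of $x_{j'}$ only. This uses the fact that a gate output depends on the input wires only through its ancestors, a simple induction along a topological order of the DAG.

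Because lightcones are taken in the unspecialized graph, specializing values can only remove dependencies, never add them. Multiplexers controlled by $\tau$ or $\sigma$ therefore cause no trouble. Round-two coins are part of $r$ and hence fixed, so the response is deterministic.

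\textbf{Step 3: the questions are still distributed as in the base game.} Since $X$ is independent of the sigma-algebra generated by $(T,R,\tau,\sigma)$, conditioning on the good event, the transcript, and the memory leaves $X$ distributed according to the original game distribution. For $\Game^{\sqcup m}$ this is the product distribution over copies. Hence the conditional behavior is that of the deterministic local strategy $(f_{j'})$, and by Theorem~\ref{thm:sm-product} it wins with probability at most $\gamma^m$.

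\textbf{Main obstacle.} The delicate point is measurability. The event ``good'' depends on $T$ and $R$, and $\tau$ may depend on $T$. I need to be sure that conditioning on these does not bias $X$. The independence assumption in the circuit model covers this directly.

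A second, smaller check concerns the null encoding. The device must not be able to detect which block is selected from $x$-dependent content, beyond what is already encoded by the constants. This holds because the null encoding is the same for all $x$.
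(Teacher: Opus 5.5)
Your proposal is correct and follows essentially the same route as the paper's proof. You fix $(t,r)$, the transcript, and the memory; goodness removes every other player's selected bundle from the unspecialized lightcone of each answer block, giving $a_{j'}=f_{j'}(x_{j'})$; and independence of the question tuple from the conditioned data lets that data act as shared randomness. Your enumeration of input classes and your observation that $P_{j'}$ is fixed by $\tau$ and $t_{j'}$ make explicit what the paper states in one line.
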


\begin{proof}
{After conditioning, the complete device tape and all information that exists before the game questions are fixed. The question tuple is independent of these variables. Fix player $j'$. Its answer block is a deterministic function of the round-two inputs in its unspecialized backward lightcone and the fixed conditioned data. Goodness excludes every fresh $x_j$-dependent wire for $j\neq j'$. The lightcone may contain $j'$'s own selected bundle, null encodings, correction labels, and fixed memory. Therefore there exists a function $f_{j'}$ such that}
\begin{equation}
 a_{j'}=f_{j'}(x_{j'}).
\end{equation}
Repeating this for every player gives a deterministic local response tuple. The functions may depend parametrically on the conditioned data, but that data is shared randomness independent of the questions.
\end{proof}

\begin{theorem}[Classical winning-probability bound]
\label{thm:sm-master}
For a $b$-player base game of classical value $\gamma$, every randomized classical device for $\Comp_H(\Game)$ satisfies
\begin{equation}
 p_{\rm c}\leq\gamma+(1-\gamma)\delta,
 \label{eq:sm-master}
\end{equation}
where $\delta$ may be chosen as $\delta_{\rm fan}$ from Eq.~\eqref{eq:sm-collision}, or as $\delta_{\rm geo}$ from Eq.~\eqref{eq:sm-geo-collision} for a geometrically local circuit.
\end{theorem}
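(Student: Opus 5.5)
The plan is to combine the collision bounds (Lemma~\ref{lem:sm-collision} or Lemma~\ref{lem:sm-geo-collision}) with the locality statement of Lemma~\ref{lem:sm-good-local}, splitting the success probability according to whether the pair $(T,R)$ is good or bad.

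Write $G$ for the event that $(T,R)$ is good and $q=\Prb_{T,R}[G^c]$. I first decompose
\begin{equation}
 p_{\rm c}=\Prb[\mathrm{win}\wedge G]+\Prb[\mathrm{win}\wedge G^c]\leq \Prb[\mathrm{win}\wedge G]+q .
\end{equation}
For the good part, I condition on $(T,R)=(t,r)$ good, together with the complete first-round transcript $\tau$ and the pre-round-two memory $M$. By Construction~\ref{con:sm-compiler} and the model of Sec.~\ref{sec:sm-soundness}, the question tuple $X$ is independent of the sigma-algebra generated by $(T,R,\tau,M)$, so under this conditioning $X$ is still distributed according to $\mu$. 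By Lemma~\ref{lem:sm-good-local}, the answers are then $a_j=f_j(X_j)$ for deterministic functions $f_j$. The conditional winning probability is therefore at most $\gamma$ by the definition of $\cval$. When the statement is applied to $\Game^{\sqcup m}$, this $\gamma$ is $\gamma^m$, by Theorem~\ref{thm:sm-product}.

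Averaging over $(\tau,M)$ and over the good $(t,r)$ gives
\begin{equation}
 \Prb[\mathrm{win}\wedge G]\leq\gamma(1-q),
\end{equation}
and hence
\begin{equation}
 p_{\rm c}\leq(1-q)\gamma+q=\gamma+(1-\gamma)q .
\end{equation}
The right-hand side is increasing in $q$ because $\gamma\leq1$. So I may replace $q$ by any upper bound: $\delta_{\rm fan}$ from Lemma~\ref{lem:sm-collision}, or $\delta_{\rm geo}$ from Lemma~\ref{lem:sm-geo-collision} in the geometrically local case. This proves Eq.~\eqref{eq:sm-master}. An equivalent route is to note that the behavior decomposes as $(1-q)\,p_{\rm L}+q\,p_{\rm R}$, so $\NLF\leq q$, and then apply Lemma~\ref{lem:sm-resource}.

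The main obstacle is the conditioning step. Goodness is a property of $(T,R)$ alone, while the transcript may depend on $T$, so I must make sure that conditioning on $G$, $\tau$ and $M$ does not bias $X$. This holds because $G$ is measurable with respect to $(T,R)$, and $X$ is independent of everything generated by $(T,R,\tau,M)$. Two further points need care. First, the collision probability must be computed without conditioning on $\tau$, and Lemma~\ref{lem:sm-collision} already does this. Second, losses and invalid responses count as failures, which only lowers $p_{\rm c}$.
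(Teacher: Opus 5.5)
Your proposal is correct and matches the paper's proof essentially step by step. Like the paper, you split on whether $(T,R)$ is good, bound success on good pairs by $\gamma$ via Lemma~\ref{lem:sm-good-local} after conditioning on the transcript and memory, bound bad pairs by one to get $(1-q)\gamma+q$, and then substitute either collision bound; you also note the equivalent nonlocal-fraction route via Lemma~\ref{lem:sm-resource}, and you make explicit the monotonicity in $q$ that the paper leaves implicit.
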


\begin{proof}
{Let $q=\Pr_{T,R}[\mathrm{bad}]$. On a good pair $(T,R)$, further conditioning on the transcript and memory and applying Lemma~\ref{lem:sm-good-local} bounds success by $\gamma$; averaging over those pre-question variables preserves the bound. On a bad pair, success is at most one. Hence}
\begin{equation}
 p_{\rm c}\leq(1-q)\gamma+q
 =\gamma+(1-\gamma)q.
\end{equation}
Substitute either collision bound. Equivalently, the behavior has local fraction at least $1-q$, and Lemma~\ref{lem:sm-resource} gives the same inequality.
\end{proof}

\begin{remark}[Cross-copy dependencies]
\label{rem:sm-crosscopy}
{When Theorem~\ref{thm:sm-master} is applied to $\Game^{\sqcup m}$, Lemma~\ref{lem:sm-collision} counts input dependencies from all other players into each answer block, including players in different copies. These cross-copy dependencies cannot be omitted. Consider $\Game^{\sqcup2}$ for the GHZ game and a classical device whose answers in copy $1$ may depend on the complete question tuple $x^{(2)}$ of copy $2$ and vice versa (the Boolean responses below have fan-in-two depth two when the relevant question bits are supplied directly; routing sparse inputs is additional). For every $y\in\{000,011,101,110\}$ there is a deterministic GHZ strategy $\sigma_y$ that loses exactly on $y$: with all six response bits zero except $a_0=b_0=c_0=1$ it loses only on $000$, and with all bits zero except $a_1=1$, $b_1=1$, or $c_1=1$ it loses only on $011$, $101$, or $110$, respectively. Let copy $1$ play $\sigma_{x^{(2)}}$ and copy $2$ play $\sigma_{x^{(1)}}$. Copy $1$ is won iff $x^{(1)}\neq x^{(2)}$ and copy $2$ is won iff $x^{(2)}\neq x^{(1)}$, the same event, so both copies are won with probability $3/4>9/16=\cval(\Game^{\sqcup2})$. A bound that counted only collisions between players of the same copy would wrongly assert $p_{\rm c}\leq9/16$ for this device. The linear factor $b=b_0m$ in Eq.~\eqref{eq:sm-collision} retains all these dependencies: they share the same $aK^D$ ancestor budget for each target answer.}
\end{remark}

\section{Amplification and asymptotic bounds}
\label{sec:sm-amplified}

\begin{theorem}[Amplified classical bound and quantum completeness]
\label{thm:sm-amplified}
Let $\Game$ be a fixed $b_0$-player pseudo-telepathy game with classical value $\gamma<1$, at most $a$ answer bits per player, and a finite-dimensional perfect quantum strategy. Let $(H_N)$ be a family of connected $N$-vertex host graphs with uniformly bounded degree. Then $\Comp_{H_N}(\Game^{\sqcup m})$ has a quantum strategy that starts from a product state, has constant depth in each round, and wins with probability one, while every depth-$D$, fan-in-$K$ randomized classical strategy satisfies
\begin{align}
 p_{\rm c}
 &\leq\gamma^m+(1-\gamma^m)
 \min\!\left\{1,
 \frac{ab_0mK^D}{N}
 \right\}
 \label{eq:sm-amplified-exact}\\
 &\leq\gamma^m+\frac{ab_0mK^D}{N}.
 \label{eq:sm-amplified-simple}
\end{align}
\end{theorem}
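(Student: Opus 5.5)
The proof assembles results already established; no new estimate is needed. I would apply the compiler of Construction~\ref{con:sm-compiler} to the base game $\Game^{\sqcup m}$ itself. That is, I treat $\Game^{\sqcup m}$ as a single finite game with $b=b_0m$ players, at most $a$ answer bits per player, and classical value $\gamma^m$.

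\emph{Completeness.} By Theorem~\ref{thm:sm-product}, $\Game^{\sqcup m}$ has a finite-dimensional perfect quantum strategy. Its state is the tensor product of $m$ copies of the base state, and its measurements are the base measurements applied by distinct players. Applying Theorem~\ref{thm:sm-completeness} with $p_{\rm q}=1$ gives a product-state-initialized implementation that wins with probability exactly one. Corollary~\ref{cor:sm-parallel} shows that the depth in each round, $O_{\Game,\Delta}(1)$, does not depend on $m$: the copies occupy disjoint registers and run in parallel. Uniformly bounded degree of $(H_N)$ makes the constant independent of $N$.

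\emph{Soundness.} I would invoke Theorem~\ref{thm:sm-master} with the base game $\Game^{\sqcup m}$, so $\gamma$ is replaced by $\cval(\Game^{\sqcup m})=\gamma^m$ via Eq.~\eqref{eq:sm-product}. The collision parameter is $\delta_{\rm fan}$ from Lemma~\ref{lem:sm-collision} with $b=b_0m$. Together these give Eq.~\eqref{eq:sm-amplified-exact} directly.

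The step that needs care is checking that the hypotheses of Lemmas~\ref{lem:sm-collision} and~\ref{lem:sm-good-local} hold for the repeated game with its full player set. The collision count must range over all ordered pairs of the $b_0m$ players, including pairs in different copies; Remark~\ref{rem:sm-crosscopy} shows these cannot be dropped. On a good pair $(T,R)$, Lemma~\ref{lem:sm-good-local} then gives each of the $b_0m$ answers as a function of that player's own question only. The result is a deterministic local strategy for $\Game^{\sqcup m}$, whose success is at most $\gamma^m$ by Theorem~\ref{thm:sm-product}. The question tuple $(x^{(1)},\ldots,x^{(m)})$ is drawn fresh and independent of $(T,R)$, the transcript, and memory, exactly as the model requires.

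Finally, Eq.~\eqref{eq:sm-amplified-simple} follows because $1-\gamma^m\le1$ and $\min\{1,y\}\le y$.
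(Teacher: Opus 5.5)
Your proposal is correct and follows the paper's proof essentially step for step. The paper also combines Theorem~\ref{thm:sm-product} for the repeated values $(\gamma^m,1)$ with Theorem~\ref{thm:sm-completeness} and Corollary~\ref{cor:sm-parallel} for the constant-depth perfect implementation, applies Theorem~\ref{thm:sm-master} with $b=b_0m$ and base classical value $\gamma^m$, and gets the second line from $1-\gamma^m\leq1$ by removing the cap.
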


\begin{proof}
Theorem~\ref{thm:sm-product} gives repeated values $(\gamma^m,1)$. Theorem~\ref{thm:sm-completeness} and Corollary~\ref{cor:sm-parallel} give the constant-depth perfect quantum implementation. Apply Theorem~\ref{thm:sm-master} with $b=b_0m$ and base classical value $\gamma^m$. For the second line, use $1-\gamma^m\leq1$ and remove the cap on the collision term.
\end{proof}

Set
\begin{equation}
 m_N=\left\lceil\log_{1/\gamma}N\right\rceil.
\end{equation}
Then
\begin{equation}
 \gamma^{m_N}\leq N^{-1},\qquad
 m_N\leq\log_{1/\gamma}N+1.
\end{equation}
Substituting into Eq.~\eqref{eq:sm-amplified-simple} proves
\begin{equation}
 p_{\rm c}\leq\frac{1}{N}
 +\frac{ab_0K^D}{N}
 \bigl(\log_{1/\gamma}N+1\bigr).
 \label{eq:sm-near-max}
\end{equation}
For fixed $K$ and $D=o(\log N)$, $K^D=N^{o(1)}$, so $p_{\rm c}\leq N^{-1+o(1)}$. In particular, fixed depth gives $p_{\rm c}=O(\log N/N)$.

The number of player--vertex blocks is
\begin{equation}
 n=b_0m_NN=\Theta_{\Game}(N\log N).
 \label{eq:sm-size}
\end{equation}
Since $N=\Theta(n/\log n)$ and $\log N=\Theta(\log n)$, Eq.~\eqref{eq:sm-near-max} becomes
\begin{equation}
 p_{\rm c}=O_{\Game}\!\left(
 \frac{K^D\log^2n}{n}
 \right).
 \label{eq:sm-size-bound}
\end{equation}

\begin{theorem}[Classical depth lower bound]
\label{thm:sm-depth}
Fix $K\geq2$ and target success $s\in(0,1)$. For all sufficiently large $N$, a classical circuit with success at least $s$ obeys
\begin{align}
 D&\geq\log_KN-\log_K\log N-O_{\Game,K,s}(1),
 \label{eq:sm-depth-N}\\
 D&\geq\log_Kn-2\log_K\log n-O_{\Game,K,s}(1).
 \label{eq:sm-depth-n}
\end{align}
\end{theorem}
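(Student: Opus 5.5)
The argument inverts the amplified bound. With $m=m_N=\lceil\log_{1/\gamma}N\rceil$, Eq.~\eqref{eq:sm-amplified-simple} and $\gamma^{m_N}\le 1/N$ give
\begin{equation*}
 s\le p_{\rm c}\le\frac1N+\frac{ab_0K^D}{N}\bigl(\log_{1/\gamma}N+1\bigr).
\end{equation*}
For all sufficiently large $N$ we have $1/N\le s/2$. Then $s/2\le ab_0K^D(\log_{1/\gamma}N+1)/N$, which rearranges to
\begin{equation*}
 K^D\ge\frac{sN}{2ab_0(\log_{1/\gamma}N+1)}.
\end{equation*}
Taking $\log_K$ gives
\begin{equation*}
 D\ge\log_KN-\log_K(\log_{1/\gamma}N+1)-\log_K\frac{2ab_0}{s}.
\end{equation*}
Next we bound $\log_{1/\gamma}N+1\le c_\gamma\log N$ for large $N$, with $c_\gamma$ depending only on $\gamma$. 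This turns the middle term into $\log_K\log N+O_{\Game,K}(1)$. All constants depend only on $a,b_0,\gamma,K,s$, which proves Eq.~\eqref{eq:sm-depth-N}. Note that the min-cap in Eq.~\eqref{eq:sm-amplified-exact} is harmless, since we only use the weaker uncapped form.

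For Eq.~\eqref{eq:sm-depth-n} we change variables using Eq.~\eqref{eq:sm-size}: $n=b_0m_NN$, with $m_N=\Theta_\gamma(\log N)$. Hence $\log_KN=\log_Kn-\log_K(b_0m_N)$, and $\log_K(b_0m_N)=\log_K\log N+O_{\Game,K}(1)$. Substituting gives
\begin{equation*}
 D\ge\log_Kn-2\log_K\log N-O(1).
\end{equation*}
It remains to replace $\log N$ by $\log n$. Since $N\le n$ (as $b_0m_N\ge1$), we have $\log N\le\log n$, so $-2\log_K\log N\ge-2\log_K\log n$. This goes in the right direction, and no two-sided estimate is needed.

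The only delicate step is bookkeeping. We must make sure that the $O(1)$ terms are uniform, which requires the ``sufficiently large $N$'' threshold to depend only on $(\Game,K,s)$: we need $1/N\le s/2$ and $\log_{1/\gamma}N+1\le c_\gamma\log N$. We must also check that the bound is read in the regime where the theorem is claimed, namely $m=m_N$. If instead one wanted the lower bound for an arbitrary fixed $m$, Eq.~\eqref{eq:sm-amplified-simple} would require $s>\gamma^m$. I would state that the theorem refers to the family $\Comp_{H_N}(\Game^{\sqcup m_N})$ and note that $n$ is its block count.
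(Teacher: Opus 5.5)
Your proof is correct and follows the paper's argument. Like the paper, you invert Eq.~\eqref{eq:sm-near-max} using $1/N\le s/2$ to get $K^D\ge sN/[2ab_0(\log_{1/\gamma}N+1)]$, take base-$K$ logarithms, and then convert to $n=b_0m_NN$; your one-sided last step, via $\log_K(b_0m_N)=\log_K\log N+O(1)$ and $N\le n$, is a slightly tidier version of the paper's two-sided estimates $\log_KN=\log_Kn-\log_K\log n-O(1)$ and $\log_K\log N=\log_K\log n+O(1)$.
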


\begin{proof}
For sufficiently large $N$, $1/N\leq s/2$. Equation~\eqref{eq:sm-near-max} and $p_{\rm c}\geq s$ imply
\begin{equation}
 K^D\geq
 \frac{sN}{2ab_0(\log_{1/\gamma}N+1)}.
\end{equation}
Taking base-$K$ logarithms gives Eq.~\eqref{eq:sm-depth-N}. Equation~\eqref{eq:sm-size} implies
\begin{equation}
 \log_KN=\log_Kn-\log_K\log n-O_{\Game}(1),
\end{equation}
and $\log_K\log N=\log_K\log n+O(1)$, which gives Eq.~\eqref{eq:sm-depth-n}.
\end{proof}

\begin{proposition}[Classical strategy of logarithmic depth]
\label{prop:sm-upper}
If every supported base-game question tuple has at least one accepting answer tuple, the distributed task has a deterministic fan-in-two classical solution of depth $O(\log n)$ and polynomial size.
\end{proposition}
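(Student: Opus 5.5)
The plan is to build an explicit circuit that first recovers every player's question by OR trees, then computes a fixed accepting answer tuple centrally, and finally broadcasts the answers to all sites. The verifier only reads the answer block at each selected terminal, so writing the correct answer at all $N$ sites of a player suffices. Here $n=b\,N$ is the number of player--vertex blocks (with $b=b_0m$), and each block carries a constant-size bundle $Y_{j,u}$ encoding either the null symbol or $(x_j,P_j)$.

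\textbf{Step 1: recover each question.} Fix a binary encoding in which the null symbol is the all-zero string. This is allowed because the null encoding is a fixed symbol distinct from every valid encoding, and a constant-depth, constant-fan-in preprocessing gate per block can map null to zeros and every valid encoding to a nonzero string carrying $x_j$ (and a validity bit). Exactly one block of player $j$, namely $u=T_j$, is non-null. Taking the bitwise OR over $u\in V$ of the preprocessed bundles therefore returns the encoding of $x_j$. For each bit this is a balanced binary OR tree on $N$ leaves, of depth $\lceil\log_2N\rceil$ and size $O(N)$. Doing this for every player and every bit gives all $b$ questions in depth $O(\log N)$ and total size $O(bN)=O(n)$ times a game-dependent constant.

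\textbf{Step 2: choose a winning answer.} For each supported tuple $x^{(s)}$ of copy $s$, the hypothesis gives an accepting answer tuple $\alpha(x^{(s)})$. Fix one such choice as a finite lookup table depending only on $\Game$. Compute $\alpha$ from the $b_0$ recovered questions of copy $s$ by a fixed Boolean circuit. Its size and depth are $O_{\Game}(1)$ because the question alphabet is finite and fan-in two suffices for any finite function. Unsupported tuples never occur and may map to anything. Copies are handled independently in parallel. The classical answer needs no frame correction, since the frames $P_j$ only matter for the quantum device; the classical device may simply ignore them.

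\textbf{Step 3: deliver the answers.} Send the $a$ bits of $a_j=\alpha(x^{(s)})_j$ to every site of player $j$ using unrestricted fan-out, which costs no depth in the model. Alternatively, with bounded fan-out, use a copy tree of depth $\lceil\log_2N\rceil$ and size $O(N)$. The answer read at $T_j$ is then correct for every player, so $V(x,a)=1$ with certainty. The total depth is $O(\log N)+O_{\Game}(1)=O(\log n)$ and the size is $O_{\Game}(n)$, which is polynomial.

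The main obstacle is a bookkeeping issue rather than a conceptual one: making sure the wiring is fixed and independent of the terminals, as the model in Sec.~\ref{sec:sm-soundness} demands. The OR tree provides this, because it consults every block of player $j$ with fixed wiring, so the terminal-dependent selection is performed by gates and charged to the depth. I will also check that a null encoding of all zeros is legitimate. If it is not, I will insert the constant-depth per-block normalization gate described in Step 1.
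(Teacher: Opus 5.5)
Your proof is correct and takes the paper's approach: a constant-depth per-block decoder feeds balanced OR trees over each player's $N$ blocks to recover all questions in depth $O(\log N)$, a fixed per-copy lookup table picks an accepting answer tuple, and unrestricted fan-out broadcasts it to every block of that player. The only difference is in round one, where the paper has the device report all-zero Bell labels so every frame is the identity; ignoring the frames is also valid in the construction, since the verifier reads $a_j$ without correction, but you should still specify some fixed valid round-one labels (zero labels also cover the GHZ variant in which the verifier relabels the record).
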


\begin{proof}
The device may report zero Bell labels in round one, so every correction is the identity. In round two, a constant-depth local decoder flags each possible question symbol at each terminal block. For each player and symbol, a balanced OR tree aggregates the flags over the $N$ blocks. This recovers all questions in depth $O(\log N)$. In parallel for each copy, a fixed lookup table chooses an accepting answer tuple for its recovered base-game questions. Unrestricted fan-out broadcasts each answer to every block assigned to that player, and the local terminal marker selects the unique emitted answer. The total depth is $O(\log N)=O(\log n)$.
\end{proof}

The repetition count $m=\Theta(\log N)$ is also the correct scale for this lightcone union bound. Writing $\alpha=\ln(1/\gamma)>0$, the simplified upper bound is
\begin{equation}
 f_N(m)=e^{-\alpha m}+c\frac{mK^D}{N},
 \qquad c=ab_0.
\end{equation}
For real $m$, this function is strictly convex. When $N/K^D$ is sufficiently large, its minimum occurs at
\begin{equation}
 m_* = \frac{1}{\alpha}\ln\!\left(\frac{\alpha N}{cK^D}\right).
\end{equation}
The minimizing positive integer is one of the two integers adjacent to $m_*$. Thus $m=\Theta(\log(N/K^D))$ minimizes this bound as $N/K^D\to\infty$, and $m=\Theta(\log N)$ for fixed or sublogarithmic depth.

\section{Geometrically local classical circuits}
\label{sec:sm-geometry}

Combining Theorem~\ref{thm:sm-product}, Theorem~\ref{thm:sm-completeness}, and the geometric form of Theorem~\ref{thm:sm-master} gives the following host-volume statement.

\begin{theorem}[Host-volume bound]
\label{thm:sm-host-volume}
If the classical round-two circuit is range-$r$ local on $H_N$, then
\begin{equation}
 p_{\rm c}\leq\gamma^m+(1-\gamma^m)
 \min\!\left\{1,
 \frac{(b_0m)(b_0m-1)\beta_{H_N}(rD)}{N}
 \right\}.
 \label{eq:sm-host-volume}
\end{equation}
On bounded-degree host families, the quantum strategy uses graph-local gates, starts from a product state, has constant depth per round, and wins with probability one.
\end{theorem}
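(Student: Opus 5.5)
The plan is to assemble the host-volume bound from three earlier results, exactly as in the proof of Theorem~\ref{thm:sm-amplified}, with the fan-in collision bound replaced by its geometric counterpart. First, Theorem~\ref{thm:sm-product} applied to $\Game$ gives a disjoint-player game $\Game^{\sqcup m}$ with $b=b_0m$ players, classical value exactly $\gamma^m$, and a finite-dimensional perfect quantum strategy obtained by tensoring $m$ copies of the base strategy.

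Second, I would apply the geometric form of Theorem~\ref{thm:sm-master} to $\Comp_{H_N}(\Game^{\sqcup m})$, with base classical value $\gamma^m$ and $\delta=\delta_{\rm geo}$ from Lemma~\ref{lem:sm-geo-collision} evaluated at $b=b_0m$. Two checks are needed here.
\begin{enumerate}
\item Lemma~\ref{lem:sm-geo-collision} sums over all ordered pairs of distinct players. The count $b(b-1)=(b_0m)(b_0m-1)$ therefore already includes cross-copy pairs, which Remark~\ref{rem:sm-crosscopy} shows cannot be dropped.
\item Lemma~\ref{lem:sm-good-local} yields a deterministic local strategy for the repeated game on a good pair. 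Its local value is then bounded by $\gamma^m$ through Eq.~\eqref{eq:sm-product}, because the repeated game is itself a finite game with that classical value.
\end{enumerate}
Substituting gives Eq.~\eqref{eq:sm-host-volume} directly.

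Third, for the quantum side I would invoke Theorem~\ref{thm:sm-completeness} with Corollary~\ref{cor:sm-parallel}. This gives product-state initialization, constant depth per round, and success probability one. What remains is \emph{graph locality}: every operation must act within a bounded distance on $H_N$. The argument goes as follows.
\begin{enumerate}
\item Each Bell pair is prepared on the two half-edge registers of one edge, so its preparation gate spans distance one.
\item Each Bell measurement acts on two registers at a single vertex.
\item The game-state preparation acts only at the root.
\item The corrected measurement acts at the terminal $T_j$.
\item Classical control (path flags, frames, questions) comes from the external verifier and is not a device gate.
\end{enumerate}
Bounded degree keeps the multiplexing at each vertex of constant size.

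I expect the main obstacle to be this locality check rather than the inequality, which is routine once the earlier results are in hand. Two points need care. The spanning tree must use edges of $H_N$, which holds because it is a subgraph. Any preparation that does not factor into constant-size pieces must be confined to one vertex.
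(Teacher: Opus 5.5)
Your proposal is correct and follows the paper's own route. The paper proves this theorem by combining Theorem~\ref{thm:sm-product}, Theorem~\ref{thm:sm-completeness} (with Corollary~\ref{cor:sm-parallel}), and the geometric form of Theorem~\ref{thm:sm-master} applied with $b=b_0m$ and base value $\gamma^m$, exactly as you do. Your explicit graph-locality check (range-one Bell-pair preparation, single-vertex Bell measurements, game-state preparation at the shared root $r_0$ where all players' game registers sit, terminal readout) spells out what the paper leaves implicit in its hardware layout.
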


For the path $P_N$, every radius-$R$ ball has at most $2R+1$ vertices. If a range-$r$ circuit succeeds with probability at least $\gamma^m+\varepsilon$, Eq.~\eqref{eq:sm-host-volume} forces
\begin{equation}
 D\geq\frac{1}{2r}\left[
 \frac{\varepsilon N}
 {(1-\gamma^m)(b_0m)(b_0m-1)}-1
 \right].
 \label{eq:sm-path-lower}
\end{equation}
For fixed $m,r,\varepsilon$, this is $\Omega(N)$. Conversely, a range-one classical device can route every sparse question to a designated center, compute an accepting answer, and route the answers back in depth $O(N)$. The separation is therefore diameter optimal for this task.

For any desired fixed gap $1-\eta$, choose a constant
\begin{equation}
 m\geq\left\lceil\log_{1/\gamma}(2/\eta)\right\rceil.
\end{equation}
Then $\gamma^m\leq\eta/2$, while the collision term in Eq.~\eqref{eq:sm-host-volume} is $o(1)$ for every $D=o(N)$. Thus sufficiently large path instances have quantum success one and classical success below $\eta$.

For a $q$-dimensional grid $H=[L]^q$, $N=L^q$ and
\begin{equation}
 \beta_H(R)\leq(2R+1)^q.
\end{equation}
At fixed $m,q,r$, a constant improvement over $\gamma^m$ requires
\begin{equation}
 D=\Omega(L)=\Omega(N^{1/q}).
\end{equation}
Coordinate-by-coordinate routing solves the task in $O(qL)$ range-one depth, again matching the graph diameter up to constants.

\section{Base games}
\label{sec:sm-games}

\subsection{GHZ game}

The referee samples uniformly from
\begin{equation}
 (x,y,z)\in\{000,011,101,110\}.
\end{equation}
Alice, Bob, and Charlie output bits $(a,b,c)$ and win when
\begin{equation}
 a\oplus b\oplus c=x\lor y\lor z.
 \label{eq:sm-ghz-rule}
\end{equation}

\begin{proposition}
\label{prop:sm-ghz}
The GHZ game has $(\cval,\qval)=(3/4,1)$.
\end{proposition}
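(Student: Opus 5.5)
The statement has two parts, and I would prove them separately: the quantum value $\qval=1$ by exhibiting a strategy, and the classical value $\cval=3/4$ by an upper bound plus a matching strategy.

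\emph{Quantum part.} Use the state $\ket{\mathrm{GHZ}}=(\ket{000}+\ket{111})/\sqrt2$. On question $0$ a player measures $X$, on question $1$ it measures $Y$, and it reports eigenvalue $+1$ as bit $0$ and $-1$ as bit $1$. The parity $a\oplus b\oplus c$ then encodes the sign of the expectation of the corresponding product observable. The key check is that $\ket{\mathrm{GHZ}}$ is an eigenvector of each relevant operator:
\begin{equation}
XXX\ket{\mathrm{GHZ}}=+\ket{\mathrm{GHZ}},\qquad XYY\ket{\mathrm{GHZ}}=YXY\ket{\mathrm{GHZ}}=YYX\ket{\mathrm{GHZ}}=-\ket{\mathrm{GHZ}}.
\end{equation}
This follows from $Y\ket0=i\ket1$ and $Y\ket1=-i\ket0$: two $Y$ factors contribute $i^2=-1$ on $\ket{000}\mapsto\ket{111}$, and $(-i)^2=-1$ on the reverse. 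The outcome parity is therefore deterministic. It is $0$ on question $000$ and $1$ on $011,101,110$, which is exactly $x\lor y\lor z$ on the promised support. Hence the strategy wins with probability one.

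\emph{Classical part.} By the remark after the definition of $\cval$, deterministic strategies suffice, so fix functions $a(x)$, $b(y)$, $c(z)$. Suppose all four questions were won. Summing the four parity constraints mod $2$, each of $a(0),a(1),b(0),b(1),c(0),c(1)$ appears exactly twice on the left, giving $0$. The right side is $0+1+1+1=1$. This contradiction shows at most three of the four equally likely questions are won, so $\cval\le 3/4$. For the matching lower bound, all-zero answers win on $000$ and lose on the other three, giving only $1/4$. A better choice is the constant strategy $a=b=0$, $c=1$ (answers independent of the questions), with parity $1$: it wins on $011,101,110$ and loses only on $000$. Thus $\cval=3/4$.

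I do not expect a real obstacle. The only step needing care is the sign bookkeeping for $Y$ acting on $\ket{\mathrm{GHZ}}$, i.e.\ keeping the stated eigenvalue-to-bit convention consistent with the frame-correction convention used elsewhere in the paper.
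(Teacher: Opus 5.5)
Your proposal is correct and follows the paper's proof essentially step for step. You use the same mod-2 sum of the four parity constraints to get $\cval\le 3/4$, and the same $X/Y$ measurements on $\ket{\mathrm{GHZ}}$ with the identities $XXX=+1$ and $XYY=YXY=YYX=-1$ to get $\qval=1$. Your explicit optimal classical strategy ($a=b=0$, $c=1$, which loses only on $000$) and the $Y\ket0=i\ket1$ sign check just fill in details the paper asserts without writing them out.
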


\begin{proof}
For a deterministic classical strategy, denote the two possible outputs of the three players by $a_0,a_1,b_0,b_1,c_0,c_1$. Winning all four questions would require
\begin{align}
 a_0\oplus b_0\oplus c_0&=0,\\
 a_0\oplus b_1\oplus c_1&=1,\\
 a_1\oplus b_0\oplus c_1&=1,\\
 a_1\oplus b_1\oplus c_0&=1.
\end{align}
XORing the equations gives zero on the left because every variable appears twice, but one on the right. Hence at least one of four equally likely questions is lost. A strategy losing exactly one question exists, so $\cval=3/4$.

For the quantum strategy, share
\begin{equation}
 \ket{\mathrm{GHZ}}=\frac{\ket{000}+\ket{111}}{\sqrt2}.
\end{equation}
On question zero measure Pauli $X$; on question one measure Pauli $Y$. Report eigenvalue $+1$ as bit zero and $-1$ as bit one. The identities
\begin{align}
 XXX\ket{\mathrm{GHZ}}&=+\ket{\mathrm{GHZ}},\\
 XYY\ket{\mathrm{GHZ}}&=-\ket{\mathrm{GHZ}},\\
 YXY\ket{\mathrm{GHZ}}&=-\ket{\mathrm{GHZ}},\\
 YYX\ket{\mathrm{GHZ}}&=-\ket{\mathrm{GHZ}}
\end{align}
produce exactly the required parities in Eq.~\eqref{eq:sm-ghz-rule}.
\end{proof}

For the compiled $m$-copy GHZ task,
\begin{equation}
 p_{\rm c}\leq\left(\frac34\right)^m
 +\left[1-\left(\frac34\right)^m\right]
 \min\!\left\{1,
 \frac{3mK^D}{N}
 \right\}.
 \label{eq:sm-ghz-bound}
\end{equation}

\subsection{Magic-square game}

Alice receives a row $r\in\{1,2,3\}$ and outputs three signs with product $+1$. Bob receives a column $c\in\{1,2,3\}$ and outputs three signs whose product is $+1$ for columns one and two and $-1$ for column three. They win when their signs agree at the intersection.

\begin{proposition}
The Magic-Square game has $(\cval,\qval)=(8/9,1)$.
\end{proposition}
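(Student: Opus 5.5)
The claim $(\cval,\qval)=(8/9,1)$ splits into a lower and upper bound on $\cval$ plus an explicit perfect quantum strategy, in the same pattern as Proposition~\ref{prop:sm-ghz}.

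\emph{Classical upper bound.} Since deterministic strategies attain $\cval$, I would fix deterministic responses. Alice assigns to each row $r$ a triple of signs $A_{r1}A_{r2}A_{r3}=+1$; Bob assigns to each column $c$ a triple $B_{1c}B_{2c}B_{3c}$ with product $+1$ for $c=1,2$ and $-1$ for $c=3$. Winning all nine question pairs $(r,c)$ would require $A_{rc}=B_{rc}$ for every cell. Then the product of all nine entries, computed by rows, is $+1$, while computed by columns it is $(+1)(+1)(-1)=-1$, a contradiction. So at least one of the nine equally likely (uniform) question pairs is lost, giving $\cval\leq 8/9$. I expect this parity argument to be the only real content, and the one point to watch is that the referee's distribution is uniform over the nine pairs, which I assume as in the standard game.

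\emph{Classical lower bound.} I would exhibit a strategy losing exactly one pair. Alice answers $(+,+,+)$ on every row. Bob answers $(+,+,+)$ on columns one and two and $(+,+,-)$ on column three. This fails only at $(r,c)=(3,3)$, so $\cval\geq 8/9$.

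\emph{Quantum strategy.} I would use the Mermin--Peres square of two-qubit Pauli observables
\[
\begin{array}{ccc}
X\otimes\id & \id\otimes X & X\otimes X\\
\id\otimes Z & Z\otimes\id & Z\otimes Z\\
-X\otimes Z & -Z\otimes X & Y\otimes Y
\end{array}
\]
Each row and column consists of commuting observables. The row products are $+\id$, and the column products are $+\id,+\id,-\id$. The players share two copies of $\ket\Phi$ with $d=2$, i.e.\ a four-qubit state. Alice jointly measures her row's three observables on her halves and Bob measures his column's observables on his halves. Row and column constraints then hold as operator identities. Agreement at the intersection follows from $(O\otimes O^{T})\ket\Phi^{\otimes2}=\ket\Phi^{\otimes2}$. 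To get exact equality rather than equality up to transpose, Bob measures $O^{T}$; for these Pauli products transposition only flips the sign of each $Y$ factor. Only the cell $Y\otimes Y$ contains $Y$ factors, and it contains two, so $(Y\otimes Y)^T=Y\otimes Y$ and no change is needed. The last check is that Bob's column products, with transposes, are still $+,+,-$, which holds by the same observation. This gives $\qval=1$ with a finite-dimensional strategy.
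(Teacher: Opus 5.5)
\textbf{The quantum strategy fails as written.} Your classical part is correct and is the paper's argument: the row/column parity contradiction, plus an explicit strategy that loses only at $(3,3)$. The quantum part, however, has a concrete error. The square you wrote does not have column products $+\id,+\id,-\id$. Because of the minus signs in row 3,
\[
(X\otimes\id)(\id\otimes Z)(-X\otimes Z)=-\id,\qquad (\id\otimes X)(Z\otimes\id)(-Z\otimes X)=-\id,
\]
and column 3 gives $(X\otimes X)(Z\otimes Z)(Y\otimes Y)=(XZY)\otimes(XZY)=-\id$. You have reproduced the textbook Mermin--Peres square for the convention in which every column must have product $-1$, not for the game stated here.

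The observables of column 1 or 2, which equal their transposes, multiply to $-\id$. So Bob's three outcomes there always multiply to $-1$, and his answer is invalid on all six questions with $c\in\{1,2\}$. The strategy therefore wins with probability only $1/3$. Your final check, that Bob's transposed column products are still $+,+,-$, repeats the same miscalculation.

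\textbf{The fix.} Drop the two minus signs in row 3, using $X\otimes Z$ and $Z\otimes X$. Row 3 is unaffected, because the signs cancelled there: $(X\otimes Z)(Z\otimes X)(Y\otimes Y)=(XZY)\otimes(ZXY)=(-i)(i)\,\id=\id$. Columns 1 and 2 become $+\id$, and column 3 stays $-\id$. This is exactly the square used in the paper. With it, the rest of your argument goes through unchanged and coincides with the paper's proof: commutation within rows and columns, the transpose identity for $\ket\Phi^{\otimes2}$, and the observation that every entry equals its transpose because only $Y\otimes Y$ contains $Y$ factors.
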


\begin{proof}
If a deterministic strategy won all nine input pairs, its intersection answers would define one $3\times3$ sign array. The product of all row products would be $+1$, while the product of all column products would be $-1$, a contradiction. A strategy that fails at only one intersection exists, so $\cval=8/9$.

For the quantum strategy, share two Bell pairs and use the commuting two-qubit observables
\begin{equation}
\begin{array}{ccc}
 X\otimes\id&\id\otimes X&X\otimes X\\
 \id\otimes Z&Z\otimes\id&Z\otimes Z\\
 X\otimes Z&Z\otimes X&Y\otimes Y
\end{array}.
\end{equation}
Every row product is $+\id$, the first two column products are $+\id$, and the third is $-\id$. The maximally entangled transpose identity
\begin{equation}
 (M\otimes\id)\ket{\Phi_4}
 =(\id\otimes M^{\mathsf T})\ket{\Phi_4}
\end{equation}
ensures agreement at the intersection because all nine displayed observables are equal to their transpose. Thus every input pair is won.
\end{proof}

If each valid three-sign answer is encoded by two independent bits, one may take $a=2$ rather than three. The simplified compiled bound is then
\begin{equation}
 p_{\rm c}\leq\left(\frac89\right)^m
 +\frac{4mK^D}{N}.
\end{equation}
For the finite instances considered here, GHZ gives lower classical thresholds and uses one qubit per player.

\subsection{\texorpdfstring{The $5\times9$ qutrit game}{The 5 x 9 qutrit game}}

For the qutrit game of Sec.~\ref{sec:sm-contextuality}, $b_0=2$, $\gamma=44/45$, and each of three possible outputs can be encoded in two bits. Therefore
\begin{equation}
 p_{\rm c}\leq\left(\frac{44}{45}\right)^m
 +\frac{4mK^D}{N}.
 \label{eq:sm-qutrit-bound}
\end{equation}
The value $44/45$ requires more repetitions than GHZ to reduce the first term to the same value. This game gives an explicit connection to qutrit contextuality, but requires more resources than the GHZ construction.

\section{Experimental protocol and finite-data certification}
\label{sec:sm-experiment}

\subsection{Logical architecture}

Choose a bounded-degree rooted tree with $N$ vertices. A balanced binary tree has maximum undirected degree three and path length $O(\log N)$, reducing accumulated physical error even though the ideal logical circuit depth is constant. {Because the classical bound depends on the host only through $N$, a finite experiment may instead choose the host to minimize the number of teleportation hops; see Sec.~\ref{sec:sm-host}.} {Here ``terminal'' means a candidate teleportation endpoint: every vertex, including the root and internal vertices, is eligible. A vertex denotes a site containing distinct registers for its incident edges, rather than a single qubit.} For the $m$-copy GHZ protocol there are $b=3m$ players, each with a separate copy of the graph. Every such graph contains
\begin{equation}
 1+2|E|=2N-1
\end{equation}
qubits: one game qubit at the root and two half-edge qubits for each of the $N-1$ edges. The total is
\begin{equation}
 Q=3m(2N-1).
 \label{eq:sm-register-count}
\end{equation}
Readout ancillas may be reused locally and are not included.

Each trial proceeds as follows.
\begin{enumerate}
\item \emph{Product initialization.} Initialize all qubits in computational-basis states.
\item \emph{Terminal draw.} {The verifier samples all $3m$ terminals independently and uniformly and sends the corresponding path flags to the device. The game questions have not been revealed.}
\item \emph{Parallel preparation.} {Prepare $m$ GHZ states on the root game qubits and a Bell pair on every edge of every selected path. Preparing a Bell pair on every player-labeled edge is also allowed but unnecessary: the path flags are already available, a classically controlled preparation has the same depth, and unused qubits remain in $\ket0$ (Remark~\ref{rem:sm-selected}).} Bell-pair preparations and distinct GHZ-copy gadgets act on disjoint registers and run in parallel; every fixed GHZ gadget has $O(1)$ internal depth and all gates have bounded fan-in.
\item \emph{Round one.} Perform the selected Bell measurements simultaneously and return every outcome used to determine the Pauli frames.
\item \emph{Question reveal.} {After the round-one transcript is fixed, reveal independent GHZ questions for each copy. The verifier may generate them at any time; what matters is that the device has no access to them before this step (Sec.~\ref{sec:sm-soundness}).}
\item \emph{Round two.} Send the question and frame to each terminal, perform the $X$ or $Y$ measurement selected by the question, and return one bit per player; {the frame is applied to the record as in Eq.~\eqref{eq:sm-frame-bit}.}
\item \emph{Scoring.} Count the trial as a win only if all $m$ copies satisfy the GHZ parity rule. Every missing outcome, loss, timeout, or invalid output counts as a loss.
\end{enumerate}
The last convention avoids a fair-sampling loophole. Postselection on successful quantum events would require applying the same declared postselection rule to the classical benchmark and rederiving the soundness bound.

\subsection{Finite parameter choices}

{For fan-in $K=2$ and online depth $D=1,2,3$, Eq.~\eqref{eq:sm-ghz-bound} gives the values in Table~\ref{tab:sm-finite}. Because the collision term is $3m2^D/N$, doubling $N$ at fixed $m$ preserves the same classical bound while increasing $D$ by one. An entry of $1$ means that the instance does not constrain that depth at all.} The column $q_{\rm iid}$ is the per-copy success that would make $q_{\rm iid}^m=B_{\rm c}$ {at $D=1$} under an independent, identical-copy diagnostic model. It is not an assumption of the certification test, which uses the directly measured all-copy success.

\begin{table}[h]
\caption{{Staged GHZ experiments against fan-in-two classical circuits of online depth $D$. $B_{\rm c}$ is the right-hand side of Eq.~\eqref{eq:sm-ghz-bound}; $q_{\rm iid}$ refers to $D=1$.}}
\label{tab:sm-finite}
\centering
\begin{ruledtabular}
\begin{tabular}{rrrrcccc}
$m$&$N$&$b$&$Q$&$B_{\rm c}$ ($D=1$)&$B_{\rm c}$ ($D=2$)&$B_{\rm c}$ ($D=3$)&$q_{\rm iid}$\\
\hline
1&17&3&99&0.838235&0.926471&1&0.838235\\
1&64&3&381&0.773438&0.796875&0.843750&0.773438\\
1&128&3&765&0.761719&0.773438&0.796875&0.761719\\
2&128&6&1530&0.603516&0.644531&0.726562&0.776863\\
2&256&6&3066&0.583008&0.603516&0.644531&0.763549\\
2&512&6&6138&0.572754&0.583008&0.603516&0.756805\\
2&2048&6&24570&0.565063&0.567627&0.572754&0.751707\\
3&2048&9&36855&0.426956&0.432037&0.442200&0.752999\\
\end{tabular}
\end{ruledtabular}
\end{table}

Increasing $N$ suppresses lightcone collisions but increases the installed register count. Preparing only the selected paths reduces the number of Bell pairs used in a trial. Increasing $m$ suppresses the base classical value exponentially while increasing the collision prefactor linearly. For $m=2$ and $N=512$, the thresholds are $0.572754$, $0.583008$, and $0.603516$ for $D=1,2,3$, respectively. Certification requires the lower confidence bound on the measured all-copy success to exceed the chosen threshold.

\subsection{Statistical confidence bounds}

Let $W_1,\ldots,W_M\in\{0,1\}$ be trial outcomes and let $p_{\rm exp}$ be the true winning probability for identically distributed trials. Define
\begin{equation}
 \widehat p=\frac{1}{M}\sum_{i=1}^{M}W_i.
\end{equation}
For {independent, identically distributed trials with a predeclared sample size,} Hoeffding's inequality gives
\begin{equation}
 \Prb\!\left[p_{\rm exp}<\widehat p-\epsilon\right]
 \leq e^{-2M\epsilon^2}.
\end{equation}
At confidence $1-\alpha$, use
\begin{equation}
 L_\alpha=\widehat p-
 \sqrt{\frac{\ln(1/\alpha)}{2M}}.
 \label{eq:sm-confidence}
\end{equation}
The experiment certifies quantum advantage against the declared classical class when
\begin{equation}
 L_\alpha>B_{\rm c}(m,N,D,K,a).
 \label{eq:sm-certification}
\end{equation}
For $\alpha=0.01$, margins $\epsilon=0.05,0.03,0.02$ require respectively
\begin{equation}
 M\geq922,\qquad2559,\qquad5757
\end{equation}
trials after rounding upward. A one-sided exact binomial interval may replace Eq.~\eqref{eq:sm-confidence} without changing the criterion in Eq.~\eqref{eq:sm-certification}.

{If drifts create dependence between trials, use a predeclared concentration bound valid for that dependence. A martingale bound applies when the conditional classical success in every trial, given the previous record, remains at most $B_{\rm c}$. This requires fresh independent terminals and questions and the stated circuit restriction on every trial. Refreshing the settings alone does not make the trials independent.}

\subsection{Implementable measurement corrections}

For qubits, every path frame is a Pauli $X^AZ^B$ up to phase. Conjugating $X$ or $Y$ by a Pauli changes only the sign of the measured observable. Thus the GHZ terminal need not synthesize a new measurement basis for every frame; it measures $X$ or $Y$ and flips the reported bit when dictated by the frame. {If $r$ is the raw outcome bit and $x=0,1$ selects the $X,Y$ measurement, respectively, then
\begin{equation}
 r_{\rm corr}=r\oplus B\oplus Ax.
 \label{eq:sm-frame-bit}
\end{equation}}
The verifier can compute $(A,B)$ by XORing the Bell labels along the path. The GHZ protocol therefore needs no frame-dependent quantum operation: the verifier may relabel the outcome after the trial. This relabeling uses only the player's own question and the pre-question transcript, so it preserves the locality used in the classical bound. The setting $x_j$ must reach terminal $T_j$ only after the round-one outcomes have been recorded.

For qutrits, the frame lies in the qutrit Weyl group. The $5\times9$ game therefore requires a finite menu of frame-conjugated qutrit basis measurements. A platform with programmable high-dimensional projective measurements can precompile all of them. The number of settings is independent of $N$ and $m$.

\subsection{Scope of certification}
\label{sec:sm-certification-scope}

The experiment certifies a circuit-complexity separation. It does not exclude unrestricted classical algorithms: Proposition~\ref{prop:sm-upper} gives a classical solution of depth $O(\log N)$. The classical competitor is allowed arbitrary precomputation and global nonuniform wiring, but its response after the fresh second-round questions is restricted to depth $D$ and fan-in $K$. A laboratory implementation should therefore state the abstract gate model and, if a timing interpretation is desired, the hardware clock-cycle convention used to translate a response deadline into $D$.

{For GHZ, treating a question-controlled $X/Y$ measurement as an elementary operation gives one measurement layer in round two, before processing the recorded frame correction. With $S=\operatorname{diag}(1,i)$, a selected terminal instead implements the measurement by applying $S^\dagger$ when $x=1$, then $H$, then computational-basis readout: two single-qubit unitary layers followed by measurement. Local terminal/null selection and any device-side correction must also be counted when reporting the executed circuit depth. The depth-one and depth-two classical thresholds refer to the declared response class and do not by themselves establish equal elementary-layer depth or a wall-clock speedup.}

The classical bound is proved analytically, so the experiment measures the quantum winning probability and must implement the stated shallow quantum circuit. In the 99-qubit GHZ test, keeping the three game qubits at fixed physical locations and routing the questions to them classically could exceed $57/68$ without demonstrating the claimed shallow-circuit separation. The executed circuit must be reported with the data, including the distinct terminal registers, product-state initialization, depth of each round, and single-qubit measurements in round two.

The verifier is trusted to sample terminals and questions independently and {to withhold the questions until round two}. The test is not device independent with respect to the declared circuit model. For the specified circuit model, the classical bound is unconditional and explicit at finite size.

\section{A 99-qubit GHZ experiment}
\label{sec:sm-99qubit}

We specify a 99-qubit implementation using one copy of the three-player GHZ game and $N=17$ possible terminals per player. The protocol consists of preparing a GHZ state,
teleporting its three subsystems to independently chosen random locations, {revealing
the GHZ measurement settings to the device} only after the teleportation transcript has been
recorded, and testing whether the observed winning probability exceeds the
classical bound implied by the circuit model.

This instance tests the two-round protocol without disjoint-player amplification. The $m\geq2$ instances in Sec.~\ref{sec:sm-experiment} reduce the base classical value from $3/4$ to $(3/4)^m$, with larger register counts.
{Here 99 counts unencoded data and resource qubits; platform-specific readout ancillas, routing overhead, and error-correction qubits are excluded.}

\subsection{GHZ questions and acceptance rule}

We use one copy of the GHZ game, so
\begin{equation}
m=1,\qquad b_0=3,\qquad a=1,\qquad
\gamma=\frac34 .
\end{equation}
The referee samples uniformly from
\begin{equation}
(x_A,x_B,x_C)\in
\{000,011,101,110\},
\end{equation}
and the three output bits $(a,b,c)$ must satisfy
\begin{equation}
a\oplus b\oplus c=x_A\lor x_B\lor x_C.
\label{eq:sm-99-ghz-rule}
\end{equation}
The ideal quantum strategy uses
\begin{equation}
\ket{\mathrm{GHZ}}
=
\frac{\ket{000}+\ket{111}}{\sqrt2},
\end{equation}
with an $X$ measurement for question $0$ and a $Y$ measurement for question $1$.

The three game qubits are not measured at fixed locations. Instead, each one is
teleported to a uniformly random terminal in its own 17-vertex tree. The game questions are {revealed to the device} only after all round-one teleportation
outcomes have been returned.

\subsection{Qubit layout}

For each of the three GHZ players, use an independent rooted tree with
\begin{equation}
N=17
\end{equation}
vertices and $N-1=16$ edges. One game qubit is placed at the root, and
one Bell-pair register pair is associated with every tree edge. Thus the number of qubits required per player layer is
\begin{equation}
1+2(N-1)=33.
\end{equation}
The complete experiment therefore requires
\begin{equation}
Q
=
3(2N-1)
=
99
\label{eq:sm-99-count}
\end{equation}
qubits.

Equivalently, the hardware contains
\begin{itemize}
\item three game qubits, jointly prepared in a GHZ state;
\item $3\times16=48$ edge register pairs distributed over the tree edges, {of which only those on the selected paths need to be prepared as Bell pairs in a given trial};
\item 96 edge qubits and three GHZ qubits in total.
\end{itemize}

The three trees are logical layers. They need not correspond to three physically
separate processors. What matters for the theorem is that the registers assigned
to different players are distinct.

{If Bell pairs are prepared only on the selected paths, the number of qubits that leave $\ket0$ in a trial is $3+2\sum_j\ell(T_j)$, about $18$ on average for a balanced binary tree, although all $99$ registers must exist so that every terminal is a distinct physical location.}

\subsection{Host graphs}
\label{sec:sm-host}

{The classical bound~\eqref{eq:sm-amplified-exact} depends on the host only through $N$, and the classical restriction applies to round two only, so the host graph may be chosen for the quantum hardware. In the illustrative noise model of Sec.~\ref{sec:sm-99-noise}, attenuation is determined by the number of teleportation hops; general idling and cross-talk errors may also depend on the layout. Three choices for $N=17$:
\begin{itemize}
\item a balanced binary tree (level populations $1,2,4,8,2$; maximum degree three) has mean root-to-terminal distance $\bar\ell=42/17\approx2.5$, that is, about $7.4$ teleportations per trial;
\item a depth-two tree with root degree four and three children per internal vertex ($1+4+12$ vertices; maximum degree four) has $\bar\ell=28/17\approx1.6$, about $4.9$ teleportations per trial;
\item a star (root joined to all sixteen other vertices) has one hop for every non-root terminal, $\bar\ell=16/17$, about $2.8$ teleportations per trial. The circuit must select which root register is Bell-measured with the game qubit. Its growing degree removes the bounded-degree guarantee of constant depth for that selection. The round-two classical bound still applies, but this host does not establish the constant-depth claim for both quantum rounds as $N$ grows.
\end{itemize}
The bounded-degree hypothesis of Theorem~\ref{thm:sm-amplified} is needed only for the constant-depth statement about round one. Using a star reduces the number of teleportation hops, but does not retain the constant-depth guarantee for round one as $N$ grows.}

\subsection{Experimental sequence}

Each experimental trial consists of the following steps.

\paragraph{1. Product-state initialization.}
Reset all 99 qubits to computational-basis product states. No entanglement
is given as an external resource.

\paragraph{2. Random terminal selection.}
{The verifier independently samples
\begin{equation}
T_A,T_B,T_C
\sim {\rm Uniform}\{1,\ldots,17\}
\end{equation}
and sends the corresponding path flags to the device. At this point the GHZ questions $(x_A,x_B,x_C)$ have not been revealed.}

{This ordering is important. The classical bound relies on the questions being independent of everything the device does before round two. The verifier may generate them earlier and keep them private; only the absence of device access is used (Sec.~\ref{sec:sm-soundness}).}

\paragraph{3. Resource preparation.}
Prepare
\begin{equation}
\ket{\mathrm{GHZ}}
=
\frac{\ket{000}+\ket{111}}{\sqrt2}
\end{equation}
on the three root game qubits. In parallel, prepare a Bell pair
\begin{equation}
\ket{\Phi^+}
=
\frac{\ket{00}+\ket{11}}{\sqrt2}
\end{equation}
on {every edge register pair of the three selected paths. Preparing all 48 edges is allowed but unnecessary (Remark~\ref{rem:sm-selected}); unused edge qubits stay in $\ket0$.}

The Bell pairs use distinct qubits and can therefore be prepared in parallel.
The resource preparation is counted as part of the quantum circuit; no
pre-existing long-range entanglement is assumed.

\paragraph{4. Round-one teleportation.}
For each player $j\in\{A,B,C\}$, select the unique root-to-$T_j$ path in the
corresponding tree and perform the Bell measurements required by the path
teleportation protocol.

All Bell measurements along one path act on distinct half-edge registers and may
therefore be carried out simultaneously. Conditioned on their outcomes, {the
joint logical GHZ register is in the state}
\begin{equation}
{
\left(\bigotimes_{j\in\{A,B,C\}}P_j\right)
\rho_{\rm GHZ}
\left(\bigotimes_{j\in\{A,B,C\}}P_j\right)^\dagger,
}
\end{equation}
where, for qubits,
\begin{equation}
P_j=X^{A_j}Z^{B_j},
\qquad A_j,B_j\in\{0,1\}.
\end{equation}
The controller records the Bell outcomes and computes the two-bit Pauli-frame
label $(A_j,B_j)$.

No physical Pauli correction is required at this stage.

\paragraph{5. Question reveal.}
{Only after every round-one Bell outcome has been recorded, reveal one of
\begin{equation}
000,\qquad011,\qquad101,\qquad110,
\end{equation}
drawn with equal probability. The component $x_j$ is sent only to terminal $T_j$.}

\paragraph{6. Round-two measurement.}
At the selected terminal, measure
\begin{equation}
x_j=0 \Longrightarrow X,
\qquad
x_j=1 \Longrightarrow Y.
\end{equation}
The Pauli frame from round one can be absorbed into classical processing. Since
conjugating $X$ or $Y$ by a Pauli operator changes only its sign, the terminal
may perform the usual $X$ or $Y$ measurement and {apply the explicit correction
$r_{j,{\rm corr}}=r_{j,{\rm raw}}\oplus B_j\oplus A_jx_j$ from Eq.~\eqref{eq:sm-frame-bit} to the record.}

{Thus the terminal measurement requires only the usual GHZ $X/Y$ measurement
capability together with delivery of the setting $x_j$ after round one. No quantum feed-forward is needed: the frame correction is a bit flip applied to the record.}

\paragraph{7. Scoring.}
After Pauli-frame correction, let the three output bits be $(a,b,c)$. The trial
is declared a win when Eq.~\eqref{eq:sm-99-ghz-rule} is satisfied.

A missing measurement outcome, unheralded loss, timeout, leakage event that
prevents a valid output, or failed {setting delivery} should be counted as a
loss unless a different treatment was fixed before data collection and the
corresponding classical bound was rederived. In particular, one should not
postselect successful quantum runs and compare them with the unconditional
classical threshold.

\subsection{Classical benchmark}

The classical device may use arbitrary global wiring, shared randomness, and pre-question memory. We give the bounds for fan-in $K=2$ and online depths $D=1$ and $D=2$. For $m=1$, $N=17$, $b_0=3$, $a=1$, and $\gamma=3/4$, Eq.~\eqref{eq:sm-amplified-exact} gives
\begin{align}
B_{\rm c}^{(99)}(D)
&=\frac34+\frac14\min\!\left\{1,\frac{3\,2^D}{17}\right\}
\nonumber\\
&=\begin{cases}
57/68\simeq0.838235, &D=1,\\
63/68\simeq0.926471, &D=2,\\
1, &D\geq3.
\end{cases}
\label{eq:sm-99-bound}
\end{align}
Thus the 99-qubit instance has a nontrivial bound for depths one and two. The threshold exceeds the ordinary GHZ value $3/4$ because the competitor may communicate through global wiring during its bounded-depth response. Table~\ref{tab:sm-finite} gives the corresponding bounds for larger instances.

{The 99-qubit choice is illustrative rather than resource-minimal. For one GHZ copy, this bound is nontrivial whenever $N>3\,2^D$: $N=7$ gives $Q=39$ and threshold $27/28$ at depth one, while $N=13$ gives $Q=75$ and threshold $51/52$ at depth two. These smaller designs require higher winning probabilities.}

The role of the random terminals is precisely to make such coordination unlikely.
One answer bit of a fan-in-two, depth-one circuit has at most two round-two input-wire ancestors in total. Each ancestor belongs to one player and one terminal block; if it belongs to another player, that player's independently chosen terminal hits the block with probability $1/17$. A union bound over the at most two ancestors and three answer bits gives collision probability at most $6/17$, even though the terminals are disclosed in round one.

\subsection{Visibility and error estimates}
\label{sec:sm-99-noise}

{The winning probability of the compiled GHZ test is an exact function of the measured correlators. Let $O_q$ be the parity observable for question tuple $q$ ($XXX$, $XYY$, $YXY$, $YYX$) and $s_q=\pm1$ the sign required by Eq.~\eqref{eq:sm-99-ghz-rule}. Since the questions are uniform,}
\begin{equation}
 p_{\rm exp}=\frac{1+V}{2},\qquad V=\frac14\sum_q s_q\langle O_q\rangle,
 \label{eq:sm-99-visibility}
\end{equation}
{where $\langle O_q\rangle$ is the frame-corrected correlator actually observed, including preparation, teleportation, idling, and readout errors. A failed trial is assigned signed parity $-1$ in the average defining $V$, consistently with counting it as a loss. Exceeding the thresholds in Eq.~\eqref{eq:sm-99-bound} requires $V>23/34\simeq0.6765$ for $D=1$ and $V>29/34\simeq0.8529$ for $D=2$. The illustrative target $p_{\rm exp}=0.96$ corresponds to $V=0.92$.}

A simple multiplicative model gives illustrative component-error targets. Let $V_0$ be the measured signed GHZ visibility in a calibration without teleportation, using the same preparation and readout. For example, a preparation visibility $V_{\rm prep}=0.98$ and independent symmetric readout flips of probability $e=0.005$ on each qubit give $V_0=V_{\rm prep}(1-2e)^3=0.950893\ldots$. We use the rounded illustrative baseline $V_0=0.95$ below.

For completeness, if $F_\pm=\langle\mathrm{GHZ}_\pm|\rho|\mathrm{GHZ}_\pm\rangle$ for the prepared state and $\ket{\mathrm{GHZ}_\pm}=(\ket{000}\pm\ket{111})/\sqrt2$, then
\begin{equation}
 V_{\rm prep}=\Tr\!\left[\rho\,\frac{XXX-XYY-YXY-YYX}{4}\right]=F_+-F_-.
 \label{eq:sm-preparation-visibility}
\end{equation}
Thus $F_+=0.98$ alone implies only $0.96\leq V_{\rm prep}\leq 0.98$, because $0\leq F_-\leq0.02$.

Suppose that every teleportation multiplies each relevant correlator by the same factor $r=1-\epsilon_{\rm t}$ with $0\leq\epsilon_{\rm t}<1$, independent of the question, terminal, and other hops. One teleportation includes a Bell-pair preparation and a Bell measurement. This illustrative model assumes valid outcomes and $0\leq V_0\leq1$. For three independently and uniformly drawn terminals on identical hosts, the exact average is
\begin{equation}
V=V_0\left[\frac1N\sum_{v\in V(H)}r^{\ell(v)}\right]^3
 \geq V_0r^{3\bar\ell},\qquad
 \bar\ell=\frac1N\sum_{v\in V(H)}\ell(v).
 \label{eq:sm-visibility-exact}
\end{equation}
The equality follows by averaging $V_0r^{\ell(T_A)+\ell(T_B)+\ell(T_C)}$ over the three terminals; the inequality is Jensen's inequality. For the binary tree, degree-four tree, and star of Sec.~\ref{sec:sm-host}, the single-terminal factors are, respectively,
\begin{equation}
\frac{1+2r+4r^2+8r^3+2r^4}{17},\qquad
 \frac{1+4r+12r^2}{17},\qquad
 \frac{1+16r}{17}.
 \label{eq:sm-host-attenuation}
\end{equation}

Using that lower bound, $V_0=0.95$ and $(1-\epsilon_{\rm t})^{3\bar\ell}\geq0.92/0.95$ suffice to reach $V\geq0.92$. The corresponding rounded estimates are
\begin{equation}
 \epsilon_{\rm t}\lesssim
 \begin{cases}
 0.4\% & \text{balanced binary tree } (\bar\ell=42/17),\\
0.64\% & \text{depth-two degree-four tree } (\bar\ell=28/17),\\
 1.1\% & \text{star } (\bar\ell=16/17),
 \end{cases}
\end{equation}
{per teleportation. These percentages describe correlator attenuation, not general gate infidelity. The target $p_{\rm exp}=0.94$ ($V=0.88$) permits approximately $1.0\%$, $1.5\%$, and $2.7\%$ attenuation per teleportation. The sample sizes needed for either classical depth are given in Sec.~\ref{sec:sm-99-certification}.}

For a hardware-based estimate, we use the mean single-qubit and two-qubit Pauli error rates $p_1=5\times10^{-4}$ and $p_2=3\times10^{-3}$ and symmetric readout error $e=7\times10^{-3}$ from the GHZ experiment in Ref.~\cite{Kumar2025Contextuality} (Supplementary Fig.~S6). We assume independent depolarizing gate errors, the same readout error for the Bell outcomes, and equal signed preparation-and-terminal-readout correlators $V_0=0.95$ for all four questions. The last assumption is an illustrative calibration, not a measured value from that experiment. Writing each CNOT as a controlled-$Z$ gate between two Hadamard gates on its target, one teleportation uses two controlled-$Z$ gates and six Hadamard gates. For this fixed gate decomposition, with $\lambda_1=1-4p_1/3$, $\lambda_2=1-16p_2/15$, and $s=1-2e$, its attenuation factors are
\begin{equation}
r_X=\lambda_1^4\lambda_2^2s=0.977090\ldots,\qquad r_Y=\lambda_1^6\lambda_2^2s^2=0.962126\ldots.
\label{eq:sm-hardware-hops}
\end{equation}
Propagating $X$ or $Y$ backwards through the teleportation circuit counts four or six single-qubit error locations and one or two Bell-readout bits, respectively. Both observables encounter the two entangling gates. Independent Pauli errors allow the same factors to multiply along paths whose Bell measurements are simultaneous. Let $A_P=N^{-1}\sum_{v\in V(H)}r_P^{\ell(v)}$ for $P=X,Y$. Averaging the four GHZ questions then gives
\begin{equation}
p_{\rm est}^{(m)}=\left[\frac{1}{2}+\frac{V_0}{8}\left(A_X^3+3A_XA_Y^2\right)\right]^m.
\label{eq:sm-hardware-estimate}
\end{equation}
The power $m$ additionally assumes independent copies. For $N=17$ and $m=1$, this gives $0.879$ for the binary tree, $0.908$ for the degree-four tree, and $0.935$ for the star. All three estimates exceed the depth-one bound $57/68$; only the star estimate exceeds the depth-two bound $63/68$. The star has the round-one selection limitation described in Sec.~\ref{sec:sm-host}. These comparisons concern the stated error model and do not establish performance on a particular device. Figure~\ref{fig:main-protocol}(b) uses binary trees filled level by level. These estimates omit additional idling, routing, leakage, and cross-talk errors. They are model projections using reported component rates, not observations of our protocol or predictions for a specified processor layout.

{These numbers guide hardware development only. No fidelity model enters the certification. The measured winning probability includes preparation, teleportation, and readout errors.} For hardware development, it is nevertheless useful to record the success
conditioned on path length and on the different GHZ question tuples. These
quantities diagnose the source of error but should not be used to remove
unsuccessful trials from the certification data.

\subsection{Finite-data certification}
\label{sec:sm-99-certification}

{Let $W_i\in\{0,1\}$ be the outcome of experimental trial $i$, and define the observed winning frequency after $M$ trials by}
\begin{equation}
\widehat p
=
\frac1M\sum_{i=1}^{M}W_i .
\end{equation}
For {independent, identically distributed trials with the number of trials fixed in advance,} a one-sided Hoeffding confidence bound gives
\begin{equation}
L_\alpha
=
\widehat p
-
\sqrt{\frac{\ln(1/\alpha)}{2M}}.
\label{eq:sm-99-confidence}
\end{equation}
At confidence $1-\alpha$, the experiment certifies advantage against the chosen classical depth $D$ whenever
\begin{equation}
L_\alpha>B_{\rm c}^{(99)}(D),\qquad D\in\{1,2\}.
\label{eq:sm-99-certification}
\end{equation}

For $\alpha=0.01$ and an observed winning frequency at least $p_*>B_{\rm c}^{(99)}(D)$, it suffices to choose
\begin{equation}
M>\frac{\ln 100}{2[p_*-B_{\rm c}^{(99)}(D)]^2}.
\end{equation}
For $p_*=0.96,0.94,0.93$, respectively, the resulting integer sample sizes are $156$, $223$, and $274$ against depth one, or $2049$, $12580$, and $184847$ against depth two. These are conditional statements about an observed frequency, not guarantees that a device with mean success $p_*$ will certify at that sample size.

If significant temporal drift or inter-trial correlations are present, the data
should instead be analyzed using a predeclared {concentration bound valid for the observed dependence, under the conditions stated in Sec.~\ref{sec:sm-experiment}}. This changes the statistical analysis but not the
classical threshold in Eq.~\eqref{eq:sm-99-bound}.

\subsection{Calibration sequence}

A practical implementation can be developed in four stages.

\begin{table}[h]
\caption{Calibration stages for the 99-qubit implementation.}
\label{tab:sm-99-stages}
\centering
\begin{ruledtabular}
\begin{tabular}{ccl}
Stage & Approx.\ qubits & Primary objective\\
\hline
A & 3 & GHZ-game preparation and $X/Y$ readout\\
B & {5 or 9} & GHZ correlations after one-hop teleportation\\
C & 20--40 & Random terminals and multi-hop teleportation\\
D & 99 & Full three-tree $N=17$ experiment
\end{tabular}
\end{ruledtabular}
\end{table}

\paragraph{Stage A: GHZ benchmark.}
Prepare and measure the bare three-qubit GHZ game. Measure the winning probability and compare it with the final target. This establishes the
state-preparation and $X/Y$-measurement baseline {and measures $V_0$}.

\paragraph{Stage B: one-hop teleportation.}
Teleport one GHZ subsystem through one Bell pair before performing its $X/Y$
measurement. Verify that Pauli-frame correction restores the expected GHZ parity
for every Bell outcome and every allowed question tuple. {This uses five qubits; teleporting all three subsystems one hop uses nine, and measures $\epsilon_{\rm t}$.}

\paragraph{Stage C: random multi-hop terminals.}
Introduce short trees and randomly choose terminal depths. Compare the corrected
GHZ winning probability as a function of path length. These data estimate how the error depends on path length and help compare the hosts of Sec.~\ref{sec:sm-host}.

\paragraph{Stage D: full experiment.}
Run all three 17-vertex layers simultaneously, independently sample all three
terminals, {reveal} the GHZ questions only after the round-one transcript, and
apply the certification criterion in Eq.~\eqref{eq:sm-99-certification}.

\subsection{Trial data}

For each experimental trial, record
\begin{equation}
\left(
T_A,T_B,T_C,
\mathbf{s}_A,\mathbf{s}_B,\mathbf{s}_C,
x_A,x_B,x_C,
a_{\rm raw},b_{\rm raw},c_{\rm raw},
a,b,c,W
\right),
\end{equation}
where $\mathbf{s}_j$ denotes the complete Bell-measurement transcript for player
$j$, the ``raw'' bits are the terminal measurement outcomes before Pauli-frame
processing, $(a,b,c)$ are the corrected outputs, and $W$ is the final win/loss
bit.

For experimental diagnostics one may additionally record
\begin{itemize}
\item the three path lengths;
\item calibration time and trial timestamp;
\item qubit leakage indicators;
\item reset diagnostics;
\item Bell-pair and Bell-measurement diagnostics;
\item readout-confidence information; and
\item measured {setting-delivery} latency.
\end{itemize}
These auxiliary quantities are useful for understanding hardware performance,
but the primary certification statistic is simply the unconditional frequency of
$W=1$.

\subsection{Certification with 99 qubits}

With the stated implementation assumptions, a lower confidence bound above $57/68$ excludes every randomized classical circuit with arbitrary global wiring, fan-in two, and online depth one. A lower confidence bound above $63/68$ also excludes online depth two.

The circuit-model assumptions, trusted-verifier requirements, and reporting requirements are those of Sec.~\ref{sec:sm-certification-scope}. In particular, the comparison permits global classical wiring and does not impose spatial isolation between players.

The 99-qubit experiment does not yet exploit the main asymptotic amplification
regime. With $m=1$, the intrinsic GHZ classical value remains $3/4$, and the
finite-$N$ collision term raises the classical threshold to
Eq.~\eqref{eq:sm-99-bound}. It specifies the resource preparation, teleportation, delayed questions, and measurements needed for a finite test. A subsequent experiment
with $m\geq2$ uses disjoint-player repetition to replace $3/4$ by
$(3/4)^m$ and thereby produces a substantially lower classical threshold.

\section{Assumptions, resources, and open questions}
\label{sec:sm-limitations}

\subsection{Initialization and resource counts}

For a fixed base strategy, regard each player's local system as one $d$-dimensional register, padding smaller spaces if necessary. The preparation and measurement circuits have constant size for this fixed strategy. For $m$ copies on a degree-$\Delta$ host with $N$ vertices, the number of half-edge qudits is at most
\begin{equation}
 2(b_0m)|E|\leq b_0m\Delta N.
\end{equation}
The total width is $O_{\Game,\Delta}(mN)$. Base-state preparation gates scale as $O(m)$, edge-entanglement gates as $O(mN)$ {if every edge is prepared and as $O(m\log N)$ on a balanced tree if only selected paths are prepared}, and terminal measurements as $O(m)$, but all copies and all edges run in parallel on disjoint registers. {Let $\ell(T_j)$ be the number of edges in $\pi(T_j)$ (equivalently, $\operatorname{dist}_H(r_0,T_j)$ for the chosen shortest-path tree). The number of selected path Bell measurements in a trial is $\sum_{j=1}^{b}\ell(T_j)$, at most $b$ times the host diameter and $O(m\log N)$ on a balanced tree. For the GHZ construction, $b=3m$ and the number of qubits participating in preparation and teleportation in a trial is $b+2\sum_j\ell(T_j)$; the total register count remains $Q=3m(2N-1)$ on a tree, since every terminal must be a distinct physical location.} All entanglement is generated from the product input during round one.

Exact completeness assumes the ideal gate model stated in Theorem~\ref{thm:sm-completeness}: the finitely many preparation and measurement operations of the fixed base strategy are available exactly. For a prescribed gate set that only approximates these operations, the quantum winning probability instead includes the resulting implementation error. The constant circuit depth may depend on the chosen strategy and its local dimension.

For explicit host families such as paths, grids, and bounded-degree trees, a deterministic Turing machine can generate the circuit wiring and gate labels in polynomial time. The quantum family is therefore uniform whenever the fixed base strategy has an explicit finite gate description. The classical lower bound is nonuniform and consequently applies a fortiori to uniform classical families.

\subsection{Frame-dependent measurements}

After teleportation, the endpoint contains $P\rho P^\dagger$. To reproduce the original measurement statistics, the protocol uses the POVM $\{PM_aP^\dagger\}_a$, where the frame $P$ is determined by the Bell transcript. The second round supplies the information needed to choose this measurement. For Pauli measurements, the frame can instead be absorbed into a classical relabeling of outcomes. This removes the need for a physical correction but does not, by itself, give a one-round construction for arbitrary measurements.

\subsection{Noise}

The ideal theorem is noiseless. {If every edge is prepared, a trial prepares $\Theta(mN)$ Bell-pair resources overall; if only the selected paths are prepared (Remark~\ref{rem:sm-selected}), it prepares $\sum_j\ell(T_j)=O(m\log N)$ pairs on a balanced tree. In either case a selected player path of length $L$ uses $\Theta(L)$ Bell measurements and consumes the corresponding links.} Without encoding, constant local noise can therefore drive the success down with $L$. Noise-robust shallow-circuit separations require additional fault-tolerant structure~\cite{BGKT2020}. Provided the circuit and input restrictions are satisfied, the finite experimental criterion requires no noise model: Eq.~\eqref{eq:sm-certification} uses the measured unconditional winning probability. What is absent is an asymptotic theorem guaranteeing that $p_{\rm exp}$ stays above the classical threshold at constant physical noise as $N$ grows.

Balanced bounded-degree trees have paths of length $O(\log N)$, compared with $O(N)$ on a path. At $N=17$, the degree-four tree shortens the paths relative to the binary tree while retaining the bounded-degree guarantee. A star further reduces the number of teleportation hops but requires separate accounting of the selection depth (Sec.~\ref{sec:sm-host}). Heralded link failures are counted as losses under the stated acceptance rule; any different treatment requires a corresponding classical bound.

\subsection{Open questions}

The following questions remain open:
\begin{enumerate}
\item Can one obtain a general one-round construction without restricting the base measurements?
\item Can the contextuality construction be combined with switching-lemma methods to separate the quantum task from stronger classical gate classes such as $\mathsf{AC}^0$?
\item Can an encoded graph-teleportation construction preserve the finite-size advantage under constant local stochastic noise?
\item Can the $5\times9$ qutrit game be reweighted, or replaced by a qutrit perfect game with few inputs and a smaller classical value, to reduce the resources needed for the contextuality-based construction?
\end{enumerate}

\section{Perfect games from contextuality}
\label{sec:sm-contextuality}

The amplification theorem begins with a finite nonlocal game admitting a finite-dimensional perfect quantum strategy. For a finite set of rank-one projectors in dimension $d\geq3$, a finite KS extension provides a way to construct such a game~\cite{Cabello2025BPQS,TrandafirCabello2025}. Figure~\ref{fig:sm-contextuality} shows the inclusion of the initial state and measurement projectors in a KS set, followed by the construction of a perfect bipartite strategy.

In dimension $d\geq3$, a KS set is a finite set of rank-one projectors to which no fixed values $0$ or $1$ can be assigned with at most one value $1$ on any orthogonal pair and exactly one value $1$ on every orthonormal basis contained in the set~\cite{KochenSpecker1967}. A KS set is critical if removing any projector makes such an assignment possible~\cite{Zimba1993}.

Criticality is not needed here. Adjoining a finite KS set in the same dimension gives an extension containing the initial state and measurement projectors, but this inclusion alone does not make the initial contextuality necessary for the perfect game. The added projectors and the bipartite entangled state are additional resources. If the initial projectors already contain a KS subset, that subset can be used directly. The qutrit example below specifies the extension and the resulting game.

{The same extension can be applied to an initially noncontextual collection. Its role here is to supply examples with additional measurements and entanglement, not a quantitative conversion of the original contextuality violation into circuit advantage.}

An infinite-dimensional ambient space causes no difficulty when the example uses a pure state and finitely many rank-one projectors: the state vector and the projector vectors span a finite-dimensional subspace that reproduces the same probabilities. More general contextuality constructions for continuous variables~\cite{Plastino2020PRA} do not by themselves supply a finite-dimensional perfect strategy. Our circuit theorem retains that hypothesis because it must implement the shared state and measurements using fixed-size registers.

\begin{figure}[t]
\centering
\includegraphics[width=0.96\textwidth]{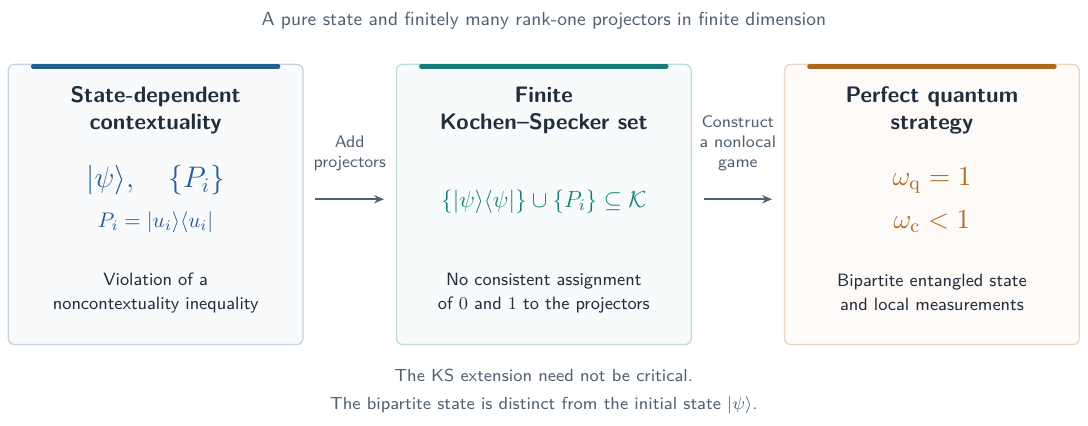}
\caption{Perfect game constructed from a KS extension. A finite collection of rank-one state and measurement projectors in dimension $d\geq3$ can be included in a finite KS set. The resulting orthogonality relations define a nonlocal game with a perfect quantum strategy and classical value strictly below one~\cite{Cabello2025BPQS,TrandafirCabello2025}. An existing KS subset can be used directly. The construction adds projectors and uses a bipartite entangled state distinct from the initial contextuality state; inclusion does not establish that the original contextuality is necessary for the game. No criticality assumption is used.}
\label{fig:sm-contextuality}
\end{figure}

\subsection{Qutrit example: the KCBS inequality} \label{sec:KCBSKS}

We use a qutrit state and five rank-one projective measurements violating the Klyachko--Can--Binicio\u{g}lu--Shumovsky (KCBS) inequality~\cite{Klyachko2008,Araujo2013}:
\begin{equation} \label{eq:kcbs}
    S = \sum_{j=1}^5 P(1,0 \mid j,j+1) \le 2,
\end{equation}
where $P(1,0 \mid j,j+1)$ is the probability of obtaining outcome $1$ when measuring observable $j$ and outcome $0$ for the compatible observable $j+1$, and the sum is taken modulo $5$.

Consider the initial state
\begin{equation} \label{eq:initialstate}
    |\psi\rangle = \tfrac{1}{\sqrt3} 
    \begin{pmatrix}
    1 \\
    1 \\
    -1
\end{pmatrix}
\end{equation}
and the five projectors $\Pi_j=|u_j\rangle\langle u_j|$, each defining a binary measurement with outcomes $1$ for $\Pi_j$ and $0$ for $\id-\Pi_j$, where
\begin{align} 
 \langle u_1 |&=(1,0,0),&\langle u_2|&=(0,1,0),
 &\langle u_3 |&=\tfrac{1}{\sqrt2}(1,0,-1),\nonumber\\[-2mm]
 \langle u_4 |&=\tfrac{1}{\sqrt3}(1,1,1),
 &\langle u_5 |&=\tfrac{1}{\sqrt2}(0,1,-1).
 \label{eq:sm-kcbs-rays}
\end{align}
Adjacent projectors are orthogonal, so $P(1,0\mid j,j+1)=|\langle u_j|\psi\rangle|^2$ and hence

\begin{equation}
    S = 2 + \tfrac{1}{9},
\end{equation}
violating the KCBS inequality \eqref{eq:kcbs}.

For this example, we extend $|\psi\rangle\langle\psi|$ and the five projectors $\{|u_j\rangle\langle u_j|\}_{j=1}^5$ into a specified KS set. All six projectors occur in the 33-ray KS set of Ref.~\cite{Cabello2025KS}; adding the other 27 projectors gives the required extension.

The next step uses the $5\times9$ game of Ref.~\cite{Cabello2025KS}, obtained by distributing the 14 complete bases between Alice and Bob. The bases are assigned using the construction of Ref.~\cite{TrandafirCabello2025}. Alice and Bob share $\ket{\Phi_3}=\tfrac{1}{\sqrt3}\sum_{j=0}^2|jj\rangle$. Alice receives one of five bases and measures its complex conjugate; Bob receives one of nine bases and measures it directly. If their reported ray labels are $u$ and $v$, then
\begin{equation}
P(u,v\mid x,y)=\frac{1}{3}|\langle u|v\rangle|^2.
\end{equation}
The game rejects orthogonal output rays, so this probability is zero for every losing output pair. The complex conjugation is needed because the KS set contains complex rays. See Ref.~\cite{Cabello2025KS} for details.

The resulting perfect strategy uses a specified KS set containing the original state and measurement projectors. The amplification theorem applies to this game without requiring the KS extension or the input assignment to be minimal.

For the uniform distribution over its $45$ input pairs, this game has classical value $44/45$~\cite{Cabello2025KS}. This value is close to one, so many copies are needed to reduce the repeated-game classical value. The example gives an explicit connection to qutrit contextuality; the proposed experiment uses GHZ, whose classical value is $3/4$.

\end{document}